\documentclass[twocolumn, 10pt]{article}

\usepackage[margin=0.8in]{geometry}
\usepackage{times}
\usepackage{graphicx}
\usepackage{caption}
\usepackage{booktabs}
\usepackage{array}
\usepackage{multirow}
\usepackage{stfloats}
\usepackage{amsmath, amssymb}
\usepackage{amsthm}
\usepackage{bm}
\usepackage{authblk}

\usepackage{cite}
\usepackage{url}
\usepackage{hyperref}
\hypersetup{
	colorlinks=true,
	linkcolor=blue,
	filecolor=magenta,
	urlcolor=blue,
	citecolor=blue,
}

\usepackage{titlesec}
\titleformat{\section}{\large\bfseries}{\thesection}{0.5em}{}
\titleformat{\subsection}{\normalsize\bfseries}{\thesubsection}{0.5em}{}
\titlespacing*{\section}{0pt}{2.5ex}{1.5ex}
\titlespacing*{\subsection}{0pt}{2ex}{1ex}

\newtheorem{lemma}{Lemma}

\newtheorem{definition}{Definition}
\newtheorem{corollary}{Corollary}
\newtheorem{proposition}{Proposition}

\title{\textbf{Descriptive power and predictive limits of a discrete Hasimoto--DNLS model of protein backbone structure}}

\author[1,2,3]{Yiquan Wang}

\affil[1]{College of Mathematics and System Science, Xinjiang University, Urumqi, Xinjiang, China}
\affil[2]{Xinjiang Key Laboratory of Biological Resources and Genetic Engineering, College of Life Science and Technology, Xinjiang University, Urumqi, Xinjiang, China}
\affil[3]{Shenzhen X-Institute, Shenzhen, China}

\date{}

\begin{document}

\maketitle
\renewcommand*{\thefootnote}{\fnsymbol{footnote}}
\footnotetext[1]{Corresponding author:
	Yiquan Wang (\href{mailto:ethan@stu.xju.edu.cn}{ethan@stu.xju.edu.cn})}
\renewcommand*{\thefootnote}{\arabic{footnote}}

\begin{abstract}
	Determining the three-dimensional structure of a protein from its amino-acid sequence remains a fundamental problem in biophysics. The discrete Frenet geometry of the C$_\alpha$ backbone can be mapped, via a Hasimoto-type transform, onto a complex scalar field $\psi=\kappa\,e^{i\sum\tau}$ satisfying a discrete nonlinear Schr\"odinger equation (DNLS), whose soliton solutions reproduce observed secondary-structure motifs. Whether this mapping, which provides a compact geometric description of folded states, can be extended to a predictive framework for protein folding is an open question. We derive an exact closed-form decomposition of the DNLS effective potential $V_{\text{eff}}=V_{\text{re}}+iV_{\text{im}}$ in terms of curvature ratios and torsion angles, validating the result to machine precision across 856 non-redundant proteins. Our analysis identifies three structural barriers to forward prediction: (i)~$V_{\text{im}}$ encodes chirality through the odd symmetry of $\sin\tau$; its magnitude is ${\sim}31\%$ that of the real part, and neglecting it introduces a $2^N$ degeneracy; (ii)~$V_{\text{re}}$ is determined primarily (${\sim}95\%$) by local geometry, leaving its explicit sequence dependence below ${\sim}5\%$ of the variance; and (iii)~self-consistent field iterations fail to recover native structures (mean RMSD $= 13.1$\,\AA) even with hydrogen-bond terms, yielding torsion correlations indistinguishable from zero. On the constructive side, we show that the residual of the DNLS dispersion relation serves as a geometric order parameter for $\alpha$-helices (ROC AUC $= 0.72$), identifying them as the regions where the backbone most closely approximates an integrable system. These findings indicate that the Hasimoto map functions as a kinematic identity rather than a dynamical governing equation. The obstacles to \textit{ab initio} prediction arise from the purely local, real-potential reduction built upon it, not from the lossless map itself.
\end{abstract}
\vspace{0.5em}

\noindent\textbf{Keywords:} Discrete Hasimoto map; Discrete nonlinear Schr\"odinger equation; Protein backbone geometry; Integrability; Torsion-sign degeneracy

\section{Introduction}

The relationship between amino-acid sequence and three-dimensional
structure is a central problem in molecular biophysics. One geometric approach treats the
backbone as a one-dimensional filament whose shape is captured by its curvature
and torsion. This idea originates in fluid mechanics: Hasimoto~\cite{hasimoto1972soliton} showed that the local induction approximation for a vortex filament can be
exactly transformed, via the complex scalar field
$\psi = \kappa\,e^{i\int\tau\,ds}$, into the cubic nonlinear
Schr\"odinger equation (NLS). Because the NLS is completely integrable,
this transformation converts the geometric evolution of a three-dimensional
curve into a one-dimensional soliton problem with exact analytical
solutions.

An analogous construction applies to the protein backbone, with one difference: the curve is discrete. It is represented as a sequence of points in $\mathbb{R}^3$ whose local shape is
characterized by two scalar fields, the bond angle $\kappa[n]$ and the
torsion angle $\tau[n]$ at each C$_\alpha$ vertex; together these two fields determine
the conformation up to rigid-body motion. A discrete Hasimoto-type transform then encodes these two fields as a
single complex scalar field, and the resulting geometric language has proven highly
successful at \emph{describing} native conformations. Since the native
fold is ultimately encoded in the amino-acid sequence~\cite{anfinsen1973principles},
it is natural to ask whether a \emph{local, integrable reduction} of the backbone geometry
can do more than describe, that is, whether it can \emph{generate} the
three-dimensional fold from sequence-level input alone. This question is distinct and largely uncharacterized.

Niemi and collaborators pursued this analogy systematically for protein backbones, developing a geometric program rooted in gauge field theory and nonlinear dynamics~\cite{danielsson2010gauge, chernodub2010topological, molochkov2017gauge, melnikov2019chern}.
Working with the discrete Frenet frame of the C$_\alpha$ chain, they constructed a generalized discrete nonlinear Schr\"odinger equation (DNLS) whose dark-soliton solutions reproduce the characteristic $(\kappa,\tau)$ profiles of $\alpha$-helices and $\beta$-strands.
Molkenthin, Hu, and Niemi~\cite{molkenthin2011discrete} showed that a two-soliton configuration reproduces the villin headpiece HP35 with RMSD~$=0.72$\,\AA, and that each constituent soliton describes over 7\,000 supersecondary structures in the Protein Data Bank (PDB).
Krokhotin, Niemi, and Peng~\cite{krokhotin2012soliton} constructed a library of 200 soliton parameter sets covering over 90\% of PDB loop structures at sub-\aa ngstr\"om accuracy.
More recently, the framework has been extended to thermal dynamics and structural stability modeling: Begun \textit{et al.}~\cite{begun2021topology} simulated the folding and unfolding of the slipknotted protein AFV3-109 using multi-soliton ans\"atze, and Begun \textit{et al.}~\cite{begun2025local} introduced Arnold's perestroikas to characterize topological phase transitions in myoglobin.
Complementing this topological perspective, Liubimov \textit{et al.}~\cite{liubimov2025modeling} applied the underlying lattice Abelian Higgs model to the same myoglobin structure, demonstrating that native conformations can be stabilized by introducing heterogeneous external fields to mimic environmental interactions.
These studies establish the DNLS and Abelian Higgs frameworks as a compact geometric language for characterizing known protein conformations. The Hasimoto map on which they rest is a rigorous, information-preserving change of variables, a local $U(1)$ gauge transformation of the discrete Frenet frame. These studies constitute a comprehensive treatment of the \emph{inverse problem}: given a known structure, the soliton basis provides an efficient representation and a robust tool for analyzing structural fluctuations. The complementary \emph{forward problem}, generating the native conformation strictly from sequence-level input, is a logically distinct question, and it is the one we take up here.

Concretely, if the DNLS effective potential $V_{\text{eff}}[n]$ could be determined from the amino-acid sequence alone, one could in principle solve the DNLS forward to obtain the native $(\kappa,\tau)$ profile and reconstruct the three-dimensional structure. Whether this is possible turns on whether the DNLS acts as a dynamical governing equation that drives folding or merely as a kinematic identity that records the final state.

Across theoretical biophysics, predictive success has consistently relied on capturing non-local information that the nearest-neighbor structure of the DNLS cannot represent. Energy landscape theory~\cite{dill2012protein} and coarse-grained models like AWSEM~\cite{davtyan2012awsem} achieve accuracy by incorporating explicit long-range contact potentials that bias the free-energy surface. Similarly, from a geometric perspective, tube models~\cite{banavar2007physics, banavar2002geometry} rely on excluded-volume interactions to select secondary structures, while direct coupling analysis (DCA)~\cite{morcos2011direct, cocco2013principal} extracts long-range contacts from evolutionary covariance. Topological features such as knots~\cite{sulkowska2020folding, dabrowski2017topological} further impose global constraints that lie beyond any local curvature description.

The same lesson emerges most clearly from recent deep-learning methods. AlphaFold~2~\cite{jumper2021highly} and AlphaFold~3~\cite{abramson2024accurate} address the prediction problem by inferring a full rigid-body frame (rotation and translation) for every residue, effectively retaining all orientational degrees of freedom. Even single-sequence methods like ESMFold~\cite{rives2021biological, lin2023evolutionary}, driven by protein language models, achieve competitive accuracy by implicitly mapping the sequence into a high-dimensional embedding space that captures non-local structural and dynamic context~\cite{wang2025sequence}. By integrating evolutionary covariance or contextual embeddings, these methods explicitly reconstruct the global information defining the native fold. In contrast, the Hasimoto transform expresses the backbone geometry as a single complex scalar field $\psi$, after which the modeling step reduces the dynamics to a local effective potential. The critical unresolved issue is whether this local real-potential reduction retains sufficient information to determine the three-dimensional structure, or whether it strips away the non-local and chiral information required for folding.

In this work, we derive an exact closed-form decomposition of the DNLS effective potential $V_{\text{eff}} = V_{\text{re}} + i\,V_{\text{im}}$ into explicit functions of the curvature ratios $r^\pm = \kappa[n\pm1]/\kappa[n]$ and torsion angles $\tau[n]$, and verify this decomposition to machine precision ($< 10^{-14}$) on 856 non-redundant proteins spanning all four SCOP structural classes. The decomposition reveals three independent structural barriers to forward prediction. First, the imaginary part $V_{\text{im}}$ encodes the sign of the torsion angle through the odd symmetry of $\sin\tau$, with magnitude roughly 31\% that of the real part; discarding it introduces a $2^N$ chiral degeneracy that grows exponentially with chain length. Second, the real part $V_{\text{re}}$ is 95\% determined by local backbone geometry rather than by the amino-acid sequence, leaving the explicit sequence-to-potential channel too weak to carry the chemical information needed for prediction. Third, self-consistent field iterations driven by hydrophobic and elastic potentials, with and without hydrogen-bond terms, fail uniformly on all 856 proteins (mean RMSD $= 13.1$\,\AA, torsion correlation indistinguishable from zero), and the two settings produce statistically identical outcomes.
Beyond identifying these static and dynamical barriers, we characterize the descriptive-predictive boundary using the language of integrable systems. We prove that within a uniform segment, the DNLS reduces to a constant-coefficient linear recurrence whose transfer matrix has unit determinant, so that the interior of the segment is rigidly fixed by its boundary data. An integrable segment can describe a backbone whose endpoints are given, but it cannot generate an interior conformation from sequence. By classifying the physical interactions governing folding by their parity under torsion inversion and their locality along the chain, we further find that the local, real-potential DNLS occupies a single (even $\times$ local) sector, whereas every driver of tertiary structure (solvent burial, electrostatics, disulfides, and hydrogen bonding) lies in the non-local row. The on-site equations underlying these reductions carry no Lax pair, which provides the structural reason that this description succeeds where \textit{ab initio} generation cannot.
These barriers are structural rather than algorithmic. They arise from
the purely local, real-potential reduction of the DNLS, not from the
Hasimoto map (which is lossless), and are unlikely to be
circumvented by parameter tuning or improved optimization alone. At the same
time, the decomposition yields a constructive result. We show that the
residual of the DNLS dispersion relation serves as a purely geometric
helix detector with ROC AUC $= 0.72$ across all SCOP classes,
identifying $\alpha$-helical segments as the backbone regions where the
DNLS most closely approximates an integrable system. This suggests a
geometric criterion for secondary-structure identification that does not
require hydrogen-bond information.

The remainder of this paper is organized as follows. Section~\ref{sec:frenet}
introduces the discrete Frenet frame and the DNLS formulation.
Section~\ref{sec:decomposition} presents the exact decomposition and its statistical
validation. Section~\ref{sec:piecewise} analyzes piecewise integrability and its
connection to secondary structure. Section~\ref{sec:scf} reports the self-consistent
field tests. Section~\ref{sec:discussion} discusses the physical origin of the barriers
and the constructive applications of the framework.

\section{Discrete Frenet frame and DNLS formulation}\label{sec:frenet}

\subsection{Discrete Frenet geometry of the C$_\alpha$ backbone}

We represent the protein backbone as a polygonal chain of C$_\alpha$
positions $\{\mathbf{r}[n]\}_{n=1}^{N}$ in $\mathbb{R}^3$, where $N$
is the number of residues. The bond vectors are defined as
\begin{equation}
	\mathbf{t}[n] = \mathbf{r}[n+1] - \mathbf{r}[n], \quad n = 1,\dots,N-1.
\end{equation}
For a C$_\alpha$ trace the bond lengths $|\mathbf{t}[n]|$ are
approximately 3.8\,\AA~\cite{engh1991accurate, lovell2003structure}, corresponding to the virtual-bond distance
between consecutive $\alpha$-carbons. We work with the unit tangent
vectors $\hat{\mathbf{t}}[n] = \mathbf{t}[n]/|\mathbf{t}[n]|$.

The discrete Frenet frame at vertex $n$ is constructed from three
successive C$_\alpha$ positions. The bond angle $\kappa[n]$ is defined
through
\begin{equation}
	\cos\kappa[n] = \hat{\mathbf{t}}[n] \cdot \hat{\mathbf{t}}[n-1],
	\quad n = 2,\dots,N-1,
	\label{eq:kappa}
\end{equation}
so that $\kappa[n] \in [0,\pi]$ measures the bending of the chain at
vertex $n$. The torsion angle $\tau[n]$ requires four successive
C$_\alpha$ positions and is defined as the dihedral angle
\begin{equation}
	\begin{split}
		\tau[n] = \text{atan2}\bigl(
		& (\hat{\mathbf{t}}[n-1]\times\hat{\mathbf{t}}[n])
		\times
		(\hat{\mathbf{t}}[n]\times\hat{\mathbf{t}}[n+1])
		\cdot \hat{\mathbf{t}}[n],\\
		& (\hat{\mathbf{t}}[n-1]\times\hat{\mathbf{t}}[n])
		\cdot
		(\hat{\mathbf{t}}[n]\times\hat{\mathbf{t}}[n+1])
		\bigr),
	\end{split}
	\label{eq:tau}
\end{equation}
with $\tau[n] \in (-\pi,\pi]$, defined for $n = 2,\dots,N-2$. The
sign of $\tau$ encodes the local handedness of the backbone: positive
values correspond to right-handed twisting and negative values to
left-handed twisting.

The pair $(\kappa[n],\tau[n])$ constitutes a complete set of internal
coordinates for the discrete curve. Given an initial position
$\mathbf{r}[1]$, an initial tangent $\hat{\mathbf{t}}[1]$, and an
initial normal direction, the full three-dimensional backbone can be
reconstructed from $\{\kappa[n],\tau[n]\}$ by sequential application
of discrete Frenet rotations. This reconstruction is unique up to a
global rigid-body transformation (three translations and three
rotations), so that $\kappa$ and $\tau$ encode all shape information
of the backbone.

The discrete normal and binormal vectors are defined as
\begin{equation}
	\hat{\mathbf{n}}[n]
	= \frac{\hat{\mathbf{t}}[n] - \cos\kappa[n]\;\hat{\mathbf{t}}[n-1]}
	{\sin\kappa[n]},
	\quad
	\hat{\mathbf{b}}[n]
	= \hat{\mathbf{t}}[n-1] \times \hat{\mathbf{n}}[n],
	\label{eq:frame}
\end{equation}
forming an orthonormal triad
$\{\hat{\mathbf{t}}[n-1],\hat{\mathbf{n}}[n],\hat{\mathbf{b}}[n]\}$
at each interior vertex. The torsion angle $\tau[n]$ then governs the
rotation of the normal plane from vertex $n$ to vertex $n+1$ about the
tangent direction.

\subsection{Hasimoto transform and complex scalar field}

Following the construction introduced by Hasimoto for continuous
curves, we define a complex scalar field on the backbone through
\begin{equation}
	\psi[n] = \kappa[n]\,\exp\!\Bigl(i\sum_{k=2}^{n}\tau[k]\Bigr),
	\quad n = 2,\dots,N-2.
	\label{eq:psi}
\end{equation}
The modulus $|\psi[n]| = \kappa[n]$ records the local bending, while
the phase $\arg\psi[n] = \sum_{k=2}^{n}\tau[k]$ accumulates the
torsion along the chain. This transform maps the two real geometric
fields $(\kappa,\tau)$ into a single complex field $\psi$, reducing
the description of backbone geometry to a one-component problem at the
cost of entangling curvature and torsion in the amplitude and phase of
$\psi$. The gauge-theoretic program of Niemi and collaborators~\cite{danielsson2010gauge, chernodub2010topological, molochkov2017gauge, melnikov2019chern} works directly with the curvature and torsion fields of the discrete Frenet frame, treating $\kappa$ and $\tau$ separately. We instead adopt the explicit complex-field form $\psi=\kappa\,e^{i\sum\tau}$ and identify it as a discrete analogue of the Hasimoto transform. This form makes manifest the amplitude-phase entanglement that underlies the present analysis.

The inverse transform recovers the geometric variables as
\begin{equation}
	\kappa[n] = |\psi[n]|, \quad
	\tau[n] = \arg\psi[n] - \arg\psi[n-1].
	\label{eq:inverse}
\end{equation}
Given $\psi[n]$ for all interior vertices, the backbone can therefore
be reconstructed by extracting $(\kappa,\tau)$ via
Eq.~(\ref{eq:inverse}) and applying the discrete Frenet reconstruction.

\subsection{Discrete nonlinear Schr\"odinger equation}

The discrete nonlinear Schr\"odinger (DNLS) equation is a standard model in nonlinear lattice dynamics~\cite{eilbeck1985discrete, kevrekidis2009discrete}; in the version used here for the backbone field $\psi[n]$ it takes the form of a discrete eigenvalue problem
\begin{equation}
	\begin{split}
		&\beta^{+}[n]\bigl(\psi[n+1]-\psi[n]\bigr)
		- \beta^{-}[n]\bigl(\psi[n]-\psi[n-1]\bigr)\\
		&\qquad = V_{\text{eff}}[n]\,\psi[n],
	\end{split}
	\label{eq:dnls}
\end{equation}
where $\beta^{+}[n]$ and $\beta^{-}[n]$ are site-dependent coupling
parameters and $V_{\text{eff}}[n]$ is the effective potential. The
left-hand side is a discrete second difference of $\psi$ with
nonuniform coefficients, analogous to a lattice Laplacian.

The coupling parameters $\beta^{\pm}[n]$ encode the local mechanical
properties of the peptide chain. In the simplest formulation they
depend on the virtual-bond lengths as
\begin{equation}
	\beta^{+}[n] = \frac{1}{|\mathbf{t}[n]|}, \quad
	\beta^{-}[n] = \frac{1}{|\mathbf{t}[n-1]|}.
	\label{eq:beta}
\end{equation}
Because the C$_\alpha$--C$_\alpha$ virtual-bond length varies only
weakly along the chain (standard deviation ${\sim}0.1$\,\AA\ around
the mean of 3.8\,\AA), the coupling parameters are nearly uniform.
This near-uniformity will play a central role in the analysis of
Sec.~\ref{sec:decomposition}, where we show that the sequence dependence of $\beta^{\pm}$
contributes less than 5\% of the variance of $V_{\text{re}}$.

The effective potential is obtained by rearranging
Eq.~(\ref{eq:dnls}):
\begin{equation}
	\begin{split}
		V_{\text{eff}}[n]
		= \frac{1}{\psi[n]}\Bigl[&\beta^{+}[n]\bigl(\psi[n+1]-\psi[n]\bigr)\\
			&- \beta^{-}[n]\bigl(\psi[n]-\psi[n-1]\bigr)\Bigr].
	\end{split}
	\label{eq:veff_def}
\end{equation}
This expression is well defined wherever $\psi[n] \neq 0$, i.e.,
wherever the bond angle $\kappa[n] \neq 0$. For physical protein
backbones $\kappa[n]$ is strictly positive at all interior vertices,
so $V_{\text{eff}}$ is defined throughout the chain.

Because $\psi$ is complex, $V_{\text{eff}}$ is in general complex:
\begin{equation}
	V_{\text{eff}}[n] = V_{\text{re}}[n] + i\,V_{\text{im}}[n].
	\label{eq:veff_split}
\end{equation}
The real and imaginary parts carry distinct geometric content, as we
demonstrate through the decomposition in Sec.~\ref{sec:decomposition}.

Two features of Eq.~(\ref{eq:veff_def}) merit emphasis. First, the
equation is an algebraic identity: for any discrete curve with
$\kappa[n]>0$, one can always compute $\psi$ via
Eq.~(\ref{eq:psi}) and then define $V_{\text{eff}}$ via
Eq.~(\ref{eq:veff_def}). No physical assumption enters this
construction. The potential $V_{\text{eff}}$ is not postulated; it is
read off from the known geometry. Second, the equation becomes a
dynamical equation only when $V_{\text{eff}}$ is specified
independently of the geometry, for example through a physical energy
functional. In the vortex-filament context, the local induction
approximation provides exactly such a specification, and the Hasimoto
transform converts the geometric evolution into the integrable NLS.
For proteins, the question is whether an analogous physical
specification of $V_{\text{eff}}$ exists. The analysis of the
following sections indicates that the answer is negative: the mathematical
form of $V_{\text{eff}}$ on protein backbones presents structural obstacles to its
determination from sequence information alone.

\subsection{Connection to the continuum Hasimoto transform}

In the continuum limit where the lattice spacing $\Delta s \to 0$ and
the coupling becomes uniform ($\beta^{\pm} \to 1/\Delta s$), the
discrete second difference in Eq.~(\ref{eq:dnls}) reduces to
$\partial^2\psi/\partial s^2$, and the DNLS recovers the structure of
the cubic NLS
\begin{equation}
	i\,\frac{\partial\psi}{\partial t}
	= \frac{\partial^2\psi}{\partial s^2}
	+ \tfrac{1}{2}|\psi|^2\psi.
	\label{eq:nls}
\end{equation}
The soliton solutions of Eq.~(\ref{eq:nls}) describe localized
bending excitations that propagate without dispersion along the
filament. In the protein context, the discrete soliton solutions of
Eq.~(\ref{eq:dnls}) have been shown to reproduce the
$(\kappa,\tau)$ profiles of $\alpha$-helices (dark solitons with
$\kappa \approx 1.5$\,rad, $\tau \approx 1.0$\,rad) and
$\beta$-strands (bright solitons with larger $\kappa$ and
$\tau \approx \pm\pi$). The key distinction is that for vortex
filaments Eq.~(\ref{eq:nls}) is both kinematic (relating $\psi$ to
geometry) and dynamic (governing time evolution), whereas for proteins
only the kinematic content survives. Establishing this distinction
quantitatively is the purpose of the remainder of this paper.

\section{Exact discrete decomposition}
\label{sec:decomposition}

The effective potential $V_{\text{eff}}[n]$ defined by Eq.~(\ref{eq:dnls}) encodes the full
geometric content of the backbone in a single complex-valued sequence. In this section we
derive a closed-form decomposition of $V_{\text{eff}}$ into real and imaginary parts, each
expressed entirely in terms of the local curvature ratios and torsion angles. The
decomposition is an algebraic identity: it holds for any discrete space curve, independent
of any physical model or approximation.

\subsection{Decomposition identity}

\begin{proposition}[Decomposition identity]
Let $\psi[n]=\kappa[n]\exp\!\bigl(i\sum_{k=2}^{n}\tau[k]\bigr)$ be the Hasimoto field constructed from the discrete Frenet curvature $\kappa[n]>0$ and torsion $\tau[n]$, and let $V_{\text{eff}}[n]$ be defined through the discrete Schr\"odinger
equation
\begin{equation}
	\beta^{+}_{n}\bigl(\psi[n+1]-\psi[n]\bigr)
	-\beta^{-}_{n}\bigl(\psi[n]-\psi[n-1]\bigr)
	= V_{\text{eff}}[n]\,\psi[n]\,,
	\label{eq:dnls_def}
\end{equation}
where $\beta^{\pm}_{n}$ are the (possibly sequence-dependent) bond stiffness parameters.
Then $V_{\text{eff}}[n]=V_{\text{re}}[n]+i\,V_{\text{im}}[n]$ with
\begin{align}
	V_{\text{re}}[n] &= \beta^{+}_{n}\,r^{+}[n]\cos\tau[n+1]
	+ \beta^{-}_{n}\,r^{-}[n]\cos\tau[n] \notag\\
	&\quad - \bigl(\beta^{+}_{n}+\beta^{-}_{n}\bigr)\,,
	\label{eq:Vre}\\[4pt]
	V_{\text{im}}[n] &= \beta^{+}_{n}\,r^{+}[n]\sin\tau[n+1]
	- \beta^{-}_{n}\,r^{-}[n]\sin\tau[n]\,,
	\label{eq:Vim}
\end{align}
where the curvature ratios are
\begin{equation}
	r^{+}[n] \equiv \frac{\kappa[n+1]}{\kappa[n]}\,,\qquad
	r^{-}[n] \equiv \frac{\kappa[n-1]}{\kappa[n]}\,.
	\label{eq:ratios}
\end{equation}
\end{proposition}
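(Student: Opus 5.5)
The identity follows by dividing Eq.~(\ref{eq:dnls_def}) through by $\psi[n]$, which is legitimate because $\kappa[n]>0$. We then compute the two neighbour quotients $\psi[n\pm1]/\psi[n]$ in polar form and separate real and imaginary parts. Since $\psi[n]\neq0$, the defining equation is equivalent to
\begin{equation*}
	V_{\text{eff}}[n]=\beta^{+}_{n}\Bigl(\frac{\psi[n+1]}{\psi[n]}-1\Bigr)+\beta^{-}_{n}\Bigl(\frac{\psi[n-1]}{\psi[n]}-1\Bigr),
\end{equation*}
so the whole argument reduces to evaluating these two ratios.

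For the forward ratio, write $\Phi[n]=\sum_{k=2}^{n}\tau[k]$. Then $\Phi[n+1]-\Phi[n]=\tau[n+1]$, and therefore
\begin{equation*}
	\frac{\psi[n+1]}{\psi[n]}=\frac{\kappa[n+1]}{\kappa[n]}\,e^{i\tau[n+1]}=r^{+}[n]\bigl(\cos\tau[n+1]+i\sin\tau[n+1]\bigr).
\end{equation*}
For the backward ratio, $\Phi[n-1]-\Phi[n]=-\tau[n]$, which gives
\begin{equation*}
	\frac{\psi[n-1]}{\psi[n]}=r^{-}[n]\,e^{-i\tau[n]}=r^{-}[n]\bigl(\cos\tau[n]-i\sin\tau[n]\bigr).
\end{equation*}
This sign flip is the source of the minus sign in Eq.~(\ref{eq:Vim}) and the plus sign in Eq.~(\ref{eq:Vre}). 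Substituting both ratios into the displayed expression and collecting terms gives the stated forms of $V_{\text{re}}$ and $V_{\text{im}}$. The constant $-(\beta^{+}_{n}+\beta^{-}_{n})$ comes from the two $-1$ terms.

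The only step that needs care is the index bookkeeping at the ends of the chain. The forward ratio requires $\psi[n+1]$, so $n+1\le N-2$. The backward ratio requires $\psi[n-1]$, so $n-1\ge2$. At $n=2$ the empty-sum convention needs a consistent phase: here $\Phi[2]=\tau[2]$, while $\Phi[1]$ should be read as the empty sum $0$. I would state the result for the interior range $3\le n\le N-3$, where every quantity is defined, and note this restriction explicitly.

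Real-valuedness is also needed. The decomposition into real and imaginary parts uses that $\beta^{\pm}_{n}$, $r^{\pm}[n]$ and $\tau$ are real. This holds by Eq.~(\ref{eq:beta}) and the definitions (\ref{eq:kappa})--(\ref{eq:tau}), so no further obstacle arises.
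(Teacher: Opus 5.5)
Your proposal is correct and matches the paper's proof: divide by $\psi[n]\neq0$, evaluate $\psi[n+1]/\psi[n]=r^{+}[n]e^{i\tau[n+1]}$ and $\psi[n-1]/\psi[n]=r^{-}[n]e^{-i\tau[n]}$, and separate real and imaginary parts with Euler's formula. Your remarks on the admissible range $3\le n\le N-3$ and on the reality of $\beta^{\pm}_n$ (which the real/imaginary split needs) are valid details that the paper leaves implicit.
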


\begin{proof}
We substitute the Hasimoto ansatz into Eq.~(\ref{eq:dnls_def}) and divide
both sides by $\psi[n]\neq 0$, which gives
\begin{equation}
	\frac{V_{\text{eff}}[n]\,\psi[n]}{\psi[n]}
	= \beta^{+}_{n}\!\left(\frac{\psi[n+1]}{\psi[n]}-1\right)
	-\beta^{-}_{n}\!\left(1-\frac{\psi[n-1]}{\psi[n]}\right).
\end{equation}
The ratio of adjacent Hasimoto fields is
\begin{equation}
	\frac{\psi[n+1]}{\psi[n]}
	= \frac{\kappa[n+1]}{\kappa[n]}\,
	\exp\!\bigl(i\,\tau[n+1]\bigr)
	= r^{+}[n]\,e^{i\tau[n+1]}\,,
\end{equation}
and similarly
\begin{equation}
	\frac{\psi[n-1]}{\psi[n]}
	= \frac{\kappa[n-1]}{\kappa[n]}\,
	\exp\!\bigl(-i\,\tau[n]\bigr)
	= r^{-}[n]\,e^{-i\tau[n]}\,.
\end{equation}
Collecting terms,
\begin{equation}
	V_{\text{eff}}[n]
	= \beta^{+}_{n}\bigl(r^{+}[n]\,e^{i\tau[n+1]}-1\bigr)
	-\beta^{-}_{n}\bigl(1-r^{-}[n]\,e^{-i\tau[n]}\bigr)\,.
\end{equation}
Separating real and imaginary parts via Euler's formula yields
Eqs.~(\ref{eq:Vre})--(\ref{eq:Vim}) directly.
\end{proof}
The derivation is algebraically direct; the value of stating it explicitly lies in
disentangling the roles of $r^{\pm}$ and $\tau$ that are obscured in the complex field
$\psi$, and in enabling the systematic tests of Secs.~\ref{sec:decomposition}--\ref{sec:scf}.

Two further features of this decomposition merit emphasis. First, the result is exact: no
continuum limit, Taylor expansion, or slowly varying envelope approximation has been
invoked. Second, the decomposition holds for arbitrary bond parameters $\beta^{\pm}_{n}$
and for any discrete curve with $\kappa[n]>0$; it is a kinematic identity rather than a
dynamical equation.

\subsection{Numerical verification}

We verify the decomposition on eight representative proteins spanning the four SCOP~\cite{murzin1995scop, andreeva2020scop}
structural classes (Table~\ref{tab:decomp}). For each protein, we compute $V_{\text{eff}}[n]$
in two independent ways: (i) directly from the definition Eq.~(\ref{eq:dnls_def}) using the
Hasimoto field $\psi[n]$, and (ii) from the analytic expressions
Eqs.~(\ref{eq:Vre})--(\ref{eq:Vim}) using $\kappa[n]$ and $\tau[n]$. The maximum absolute
difference over all residues is reported in the column labeled $\epsilon$ in
Table~\ref{tab:decomp}.

\begin{table*}[t]
	\centering
	\caption{Numerical verification of the exact decomposition on eight representative proteins
		(two per SCOP class). $N$: number of C$_\alpha$ atoms; $\epsilon$: maximum absolute error
		between the two independent computations of $V_{\text{eff}}$;
		$\langle|V_{\text{im}}|/|V_{\text{re}}|\rangle$: mean ratio of imaginary to real potential
		magnitude; $\rho_{\text{geom}}$: Spearman correlation between $V_{\text{re}}$ computed with
		physical $\beta(s)$ and with uniform $\beta=1$; $\delta_H$, $\delta_E$, $\delta_C$:
		dispersion-relation RMSE (in degrees) for helix, strand, and coil residues respectively;
		RMSD: backbone RMSD (\AA) of the structure reconstructed from $V_{\text{re}}$ only (setting
		$V_{\text{im}}=0$). A dash indicates that the corresponding secondary-structure type is
		absent.}
	\label{tab:decomp}
		\begin{tabular}{llcccccccr}
			\toprule
			PDB  & Class & $N$ & $\epsilon$ &
			$\langle|V_{\text{im}}|/|V_{\text{re}}|\rangle$ &
			$\rho_{\text{geom}}$ &
			$\delta_H$ & $\delta_E$ & $\delta_C$ & RMSD \\
			\midrule
			1PA7 & all-$\alpha$ & 130 & $4.9\times10^{-15}$ & 0.324 & 0.966 & 20.4 & ---   & 39.4 & 51.2 \\
			2RH3 & all-$\alpha$ & 130 & $3.1\times10^{-15}$ & 0.332 & 0.963 & 27.1 & 36.6 & 35.6 & 32.5 \\
			5SV5 & all-$\beta$  & 134 & $1.4\times10^{-14}$ & 0.290 & 0.913 & ---   & 41.7 & 40.0 & 28.9 \\
			1AYO & all-$\beta$  & 130 & $1.8\times10^{-14}$ & 0.247 & 0.971 & 20.1 & 37.1 & 40.3 & 56.6 \\
			2B1L & $\alpha/\beta$ & 129 & $5.1\times10^{-15}$ & 0.300 & 0.945 & 17.8 & 38.2 & 42.5 & 55.9 \\
			1JBE & $\alpha/\beta$ & 132 & $1.8\times10^{-15}$ & 0.305 & 0.956 & 21.3 & 45.8 & 28.8 & 36.6 \\
			3CIP & $\alpha$+$\beta$ & 130 & $9.3\times10^{-15}$ & 0.273 & 0.949 & 22.1 & 31.1 & 36.3 & 55.4 \\
			2R4I & $\alpha$+$\beta$ & 130 & $3.6\times10^{-15}$ & 0.264 & 0.969 & 24.6 & 39.4 & 32.7 & 48.2 \\
			\bottomrule
		\end{tabular}
\end{table*}

In all eight cases the error $\epsilon$ is of order $10^{-15}$ to $10^{-14}$, consistent with IEEE 754 double-precision arithmetic. This confirms that Eqs.~(\ref{eq:Vre})--(\ref{eq:Vim}) reproduce $V_{\text{eff}}$ to machine precision and contain no hidden approximation. The remaining columns of Table~\ref{tab:decomp} preview
the structural barriers analyzed in the following subsections and in Secs.~\ref{sec:piecewise}--\ref{sec:scf}; we defer their full statistical treatment to the analysis of the 856-protein dataset.

\subsection{Barrier I: imaginary potential and torsion-sign ambiguity}\label{sec:barrier1}

The imaginary part $V_{\text{im}}$ [Eq.~(\ref{eq:Vim})] depends on $\tau$ exclusively
through the sine function. Because $\sin(\tau)$ is odd, $V_{\text{im}}$ changes sign under
$\tau\to-\tau$ while $V_{\text{re}}$ [Eq.~(\ref{eq:Vre})], which depends on $\cos(\tau)$,
remains invariant. We now formalize the resulting degeneracy as a symmetry statement.
This result holds \emph{within the class of purely local,
real-potential models}, where the modeling choice is to retain only
$V_{\text{re}}=\operatorname{Re}V_{\text{eff}}$. The decomposition
[Eqs.~(\ref{eq:Vre})--(\ref{eq:Vim})] is itself an exact kinematic identity that loses
no information; the degeneracy below is a consequence of \emph{discarding}
$V_{\text{im}}$, and is therefore a property of the model class, not of the Hasimoto map.

\begin{definition}[Site-wise torsion-parity group]
\label{def:z2N}
Let $N$ be the chain length, i.e.\ the number of torsion angles $\tau[1],\dots,\tau[N]$.
Let $G=(\mathbb{Z}_2)^{N}$, whose elements are subsets $S\subseteq\{1,\dots,N\}$ with
group product the symmetric difference, generated by the elementary flips $g_k=\{k\}$.
$G$ acts on the torsion configuration by site-wise inversion,
$(g_S\cdot\tau)[k]=-\tau[k]$ if $k\in S$ and $\tau[k]$ otherwise.
\end{definition}

\begin{lemma}[Site-wise torsion-parity symmetry and chiral degeneracy]
\label{lem:z2N}
Fix the curvatures $\kappa[n]>0$ and the real bond parameters $\beta^{\pm}_n$. For every
$g_S\in G=(\mathbb{Z}_2)^N$ the real potential is invariant,
\begin{equation}
	V_{\text{re}}[n](g_S\cdot\tau)=V_{\text{re}}[n](\tau)\qquad(n=1,\dots,N),
\end{equation}
i.e.\ each $V_{\text{re}}[n]$ carries the trivial representation of $G$, whereas every
term of $V_{\text{im}}$ containing $\tau[k]$ transforms by the sign character
$\chi_k(g_S)=(-1)^{[k\in S]}$ and thus changes sign under the generator $g_k$.
Consequently, for an open (terminated) chain the configurations sharing a given profile
$\{V_{\text{re}}[n]\}_{n=1}^{N}$ and related by site-wise torsion-sign flips form a single
$G$-orbit, of size
\begin{equation}
	\#(G\cdot\tau)=2^{N'},\qquad N'=\#\{k:\tau[k]\notin\{0,\pi\}\}.
	\label{eq:degcount}
\end{equation}
For a \emph{chirally generic} backbone ($\tau[k]\notin\{0,\pi\}$ for all $k$) this is
exactly $2^{N}$.
\end{lemma}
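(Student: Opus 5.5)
The argument reads everything off the closed forms of the Decomposition identity, Eqs.~(\ref{eq:Vre})--(\ref{eq:Vim}). These express $V_{\text{re}}[n]$ and $V_{\text{im}}[n]$ only through the fixed data $\kappa$ and $\beta^{\pm}_n$ (via $r^{\pm}[n]$) and through $\cos\tau[k]$, $\sin\tau[k]$ for $k\in\{n,n+1\}$. The action of $G$ leaves $\kappa$ and $\beta^\pm_n$ untouched and replaces $\tau[k]$ by $\epsilon_k\tau[k]$, where $\epsilon_k=\chi_k(g_S)=(-1)^{[k\in S]}$. The proof therefore reduces to parity bookkeeping, done in three steps.

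\emph{Invariance and the sign character.} Since cosine is even, $\cos(\epsilon_k\tau[k])=\cos\tau[k]$ for every $k$. Substituting into Eq.~(\ref{eq:Vre}) gives $V_{\text{re}}[n](g_S\cdot\tau)=V_{\text{re}}[n](\tau)$ for all $n$ and all $S$, so each $V_{\text{re}}[n]$ carries the trivial representation. For the imaginary part, $\sin(\epsilon_k\tau[k])=\epsilon_k\sin\tau[k]$. The term $\beta^+_n r^+[n]\sin\tau[n+1]$ is therefore multiplied by $\chi_{n+1}(g_S)$, and the term $-\beta^-_n r^-[n]\sin\tau[n]$ by $\chi_n(g_S)$. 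Because $\chi_k(g_S g_{S'})=\chi_k(g_S)\chi_k(g_{S'})$ (symmetric difference adds parities mod 2), each $\chi_k$ is a character. Evaluating at $S=\{k\}$ gives $-1$, which is the claimed sign change under $g_k$. Endpoint sites, where only one of the two terms exists, need only the remark that missing terms are dropped.

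\emph{Orbit and its size.} By the invariance step, every element of the orbit $G\cdot\tau$ produces the same profile $\{V_{\text{re}}[n]\}$. Conversely, configurations obtained from one another by site-wise sign flips lie in a single orbit by definition of the action. This is what ``form a single $G$-orbit'' means in the statement: I will read the claim as being about the set of sign-flip-related configurations, not as saying that all configurations sharing the profile are such flips. I will say this explicitly to avoid overclaiming. For the open-chain case the $\tau[k]$ are independent coordinates, with no periodicity or closure constraint linking them, so the orbit is simply the product over sites. This gives $\#(G\cdot\tau)=|G|/|\mathrm{Stab}(\tau)|$ by orbit--stabilizer. Then $g_S$ fixes $\tau$ iff $-\tau[k]=\tau[k]$ modulo $2\pi$ for every $k\in S$, i.e.\ iff $\tau[k]\in\{0,\pi\}$ for each $k\in S$. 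Using the convention $\tau\in(-\pi,\pi]$, $-\pi$ is identified with $\pi$. Hence $\mathrm{Stab}(\tau)=(\mathbb{Z}_2)^{N-N'}$ and $\#(G\cdot\tau)=2^{N'}$. The chirally generic case $N'=N$ gives exactly $2^N$.

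\emph{Main obstacle.} The algebra is immediate; the delicate step is the orbit count. It requires being precise about two things. First, the angle representative: flipping $\pi$ must return $\pi$, not a distinct $-\pi$. Second, distinct elements of $G$ on generic sites must give genuinely distinct configurations, which holds because $\tau[k]\neq-\tau[k]\pmod{2\pi}$ when $\tau[k]\notin\{0,\pi\}$. I would also note that this count concerns torsion configurations, not reconstructed curves modulo rigid motions, so no further quotient is taken.
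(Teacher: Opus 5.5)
Your proposal is correct and follows the paper's own proof. Both read invariance of $V_{\text{re}}$ and the sign character of $V_{\text{im}}$ off the evenness of $\cos$ and oddness of $\sin$ in Eqs.~(\ref{eq:Vre})--(\ref{eq:Vim}), and both obtain $\#(G\cdot\tau)=2^{N'}$ by orbit--stabilizer, with the stabilizer generated by the flips at sites where $\tau[k]\in\{0,\pi\}$; your differences (treating a general $g_S$ directly instead of reducing to commuting generators, and making explicit the identification $-\pi\sim\pi$ and the intended reading of ``single $G$-orbit'') are harmless refinements.
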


\begin{proof}
Since $G=\langle g_1,\dots,g_N\rangle$ with commuting generators, it suffices to treat a
single generator $g_k$. By Eq.~(\ref{eq:Vre}) the angle $\tau[k]$ appears in
$V_{\text{re}}$ only in $V_{\text{re}}[k]$ (through the term
$\beta^{-}_{k}r^{-}[k]\cos\tau[k]$) and in $V_{\text{re}}[k-1]$ (through the term
$\beta^{+}_{k-1}r^{+}[k-1]\cos\tau[(k-1){+}1]=\beta^{+}_{k-1}r^{+}[k-1]\cos\tau[k]$); no
other entry depends on $\tau[k]$, and $\kappa$, $r^{\pm}$, $\beta^{\pm}$ are independent of
the torsion sign. As $g_k$ sends $\tau[k]\mapsto-\tau[k]$ and $\cos$ is even, both entries
are unchanged, proving invariance. In Eq.~(\ref{eq:Vim}) the same two slots carry
$\tau[k]$ through the odd function $\sin$, so those terms change sign; this is the sign
character $\chi_k$. The generators act on disjoint coordinates and commute, so the
statements extend to an arbitrary $g_S$ by composition. The orbit count
(\ref{eq:degcount}) then follows from the orbit--stabilizer theorem: a flip at a site with
$\tau[k]\in\{0,\pi\}$ is trivial (there $\sin\tau[k]=0$ and $\tau[k]$ lies on a symmetry
axis of $\cos$), so $|\mathrm{Stab}_G(\tau)|=2^{N-N'}$ and
$\#(G\cdot\tau)=|G|/|\mathrm{Stab}_G(\tau)|=2^{N'}$.
\end{proof}

The sign of $\tau[k]$ is the local chirality. Lemma~\ref{lem:z2N} shows that retaining
only the real potential projects every observable onto the $G$-invariant
(trivial-representation) subspace; the chiral information resides entirely in the sign
sectors carried by $V_{\text{im}}$ and is annihilated by this projection. Equivalently,
discarding $V_{\text{im}}$ is exactly the projection onto the $G$-invariants. A
right-handed $\alpha$-helix and its per-residue left-handed enantiomer therefore produce
identical $V_{\text{re}}$ profiles, and reconstruction from $V_{\text{re}}$ alone confronts
the $2^{N}$-fold ambiguity of Eq.~(\ref{eq:degcount}) for a generic backbone. This is
the rigorous content of the statement that a purely real potential cannot break the chiral
symmetry; restriction to $V_{\text{re}}$ is identical to projection onto the $G$-invariants, which are insensitive to the sign sectors.

This site-wise parity $\tau[k]\to-\tau[k]$ is a distinct object from
the discrete $\mathbb{Z}_2$ gauge transformation
$\kappa[k]\to\kappa[k]\cos\Delta_k,\ \tau[k]\to\tau[k]+\Delta_{k-1}-\Delta_k$ (with
$\Delta_k\in\{0,\pi\}$) introduced by Chernodub \textit{et al.}~\cite{chernodub2010topological}
to maintain the convention $\kappa\ge0$ across conjugacy points. The latter leaves the
Cartesian coordinates intact and is therefore a genuine gauge redundancy, whereas the
parity considered here changes the backbone chirality and hence the three-dimensional
structure, a change rendered invisible only by discarding $V_{\text{im}}$.

The degeneracy can be lifted only by adding to the energy a term that is \emph{odd} under
each $g_k$, i.e.\ that has a nonzero component in the sign sectors. The minimal local
choice is a sum of single-site odd functions,
$H_{\text{CS}}[\tau]=\lambda\sum_{k=1}^{N} f(\tau[k])$ with $f(-\theta)=-f(\theta)$, such as
the discrete total-torsion (Chern--Simons-type) coupling $\lambda\sum_k\tau[k]$. Under
$g_S$ one finds
$H_{\text{CS}}[g_S\cdot\tau]=H_{\text{CS}}[\tau]-2\lambda\sum_{k\in S}f(\tau[k])$, so for
$\lambda\neq0$ no nontrivial $g_S$ remains a symmetry: the group is explicitly broken and
minimization selects a unique handedness at each site, with $\operatorname{sign}(\lambda)$
fixing the global chirality. This is consistent with the observation of Danielsson
\textit{et al.}~\cite{danielsson2010gauge} that the torsion sign can be fixed by a
Chern--Simons term; we emphasize, however, that
such a term \emph{injects} the missing chirality information rather than conferring
predictive power, and the degeneracy is a property of the local, real-potential model
class, not of the Hasimoto map itself.

\vspace{0.5em}
\noindent\textbf{Remark (discrete probability current).}
The role of $V_{\text{im}}$ can be stated as a conservation law. Multiplying
Eq.~(\ref{eq:dnls_def}) by $\bar\psi[n]$ and taking the imaginary part gives, for
arbitrary site-dependent real $\beta^{\pm}_n$ and complex $V_{\text{eff}}$, the exact
discrete continuity identity
\begin{equation}
  \begin{aligned}
  &J[n]-J[n-1]=V_{\text{im}}[n]\,|\psi[n]|^{2},\\
  &J[n]=\beta^{+}_{n}\,\mathrm{Im}\!\bigl(\bar\psi[n]\psi[n+1]\bigr)
       =\beta^{+}_{n}\,\kappa[n]\kappa[n+1]\sin\tau[n+1],
  \end{aligned}
  \label{eq:current}
\end{equation}
where $J[n]$ is the discrete probability current (the $(\psi,\bar\psi)$ Casoratian of the
recurrence). Thus $V_{\text{im}}$ is the \emph{unique source} of $J$: $V_{\text{im}}=0$ is
precisely the condition under which $J$ would be conserved along the chain. Real backbones
have generically nonzero $V_{\text{im}}$ (median $\langle|V_{\text{im}}|/|V_{\text{re}}|\rangle
\approx 0.30$ over the $856$ chains; nonzero at $99.8\%$ of sites), so the geometric flux
$J$ is genuinely \emph{not} conserved, changing sign along every chain, and we make no
claim that real proteins obey $\kappa\kappa'\sin\tau=\text{const}$. The consequence is that
restricting to a real effective potential is not a constraint imposed on the backbone but a
\emph{model class} that is structurally blind to this flux: it discards the single
gauge-covariant, parity-odd field $V_{\text{im}}$ that sources the only nontrivial conserved
bilinear current of the discrete map. The torsion-sign degeneracy of Lemma~\ref{lem:z2N} is
the kinematic statement of this fact, and Eq.~(\ref{eq:current}) is its dynamical counterpart. The discarded
imaginary part is at once the carrier of chirality and the source of the probability current.

We quantify the information content of $V_{\text{im}}$ across the full dataset of $856$
non-redundant proteins. Figure~\ref{fig:vim_info}(a) shows the distribution of the ratio
$\langle|V_{\text{im}}|/|V_{\text{re}}|\rangle$ per protein. The mean ratio is 0.31,
indicating that the imaginary component has a magnitude roughly one-third that of the real
part. The distribution is largely independent of SCOP class, with all-$\alpha$ proteins
showing a slightly broader tail toward higher values.

To demonstrate the practical consequence of discarding $V_{\text{im}}$, we reconstruct the
backbone of each protein using $V_{\text{re}}$ only (setting $V_{\text{im}}=0$ and choosing
$\text{sign}(\tau)$ uniformly positive). Figure~\ref{fig:vim_info}(b) plots the resulting
C$\alpha$ RMSD against chain length. The RMSD grows approximately linearly, reaching
40--120~\AA{} for chains of 200--300 residues. This linear growth reflects the cumulative
nature of torsion-sign errors: each incorrectly assigned sign produces a local angular
deviation that propagates along the chain. In information-theoretic terms, each residue
contributes one bit of unresolved chiral information, for a total of $N$ bits per chain.
The $V_{\text{re}}$-only reconstruction therefore yields structures that deviate substantially from the native fold for all but the shortest peptides.

\begin{figure}[t]
	\centering
	\includegraphics[width=1\columnwidth]{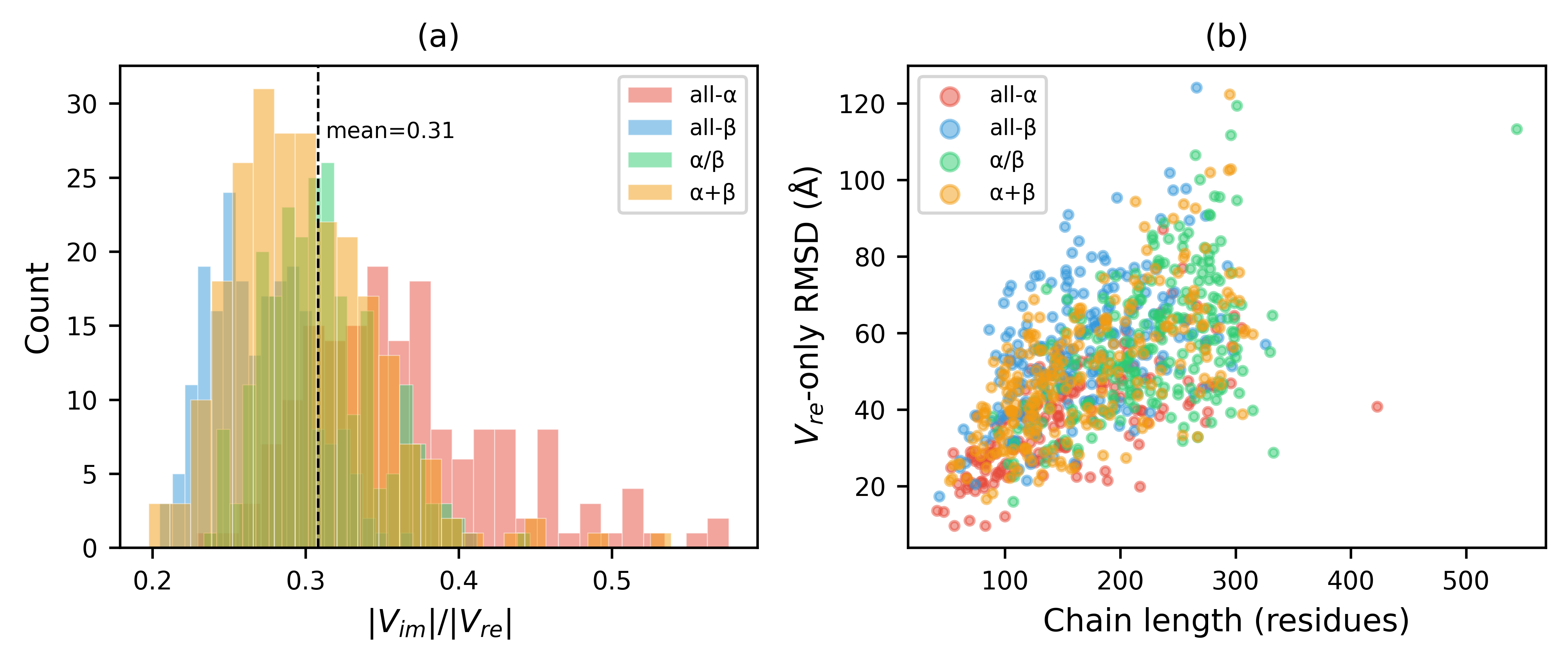}
	\caption{Information cost of discarding the imaginary potential.
		(a)~Distribution of the ratio
		$\langle|V_{\text{im}}|/|V_{\text{re}}|\rangle$ across 856
		non-redundant proteins, colored by SCOP class. The mean ratio is
		0.31, indicating that the imaginary component, which encodes the
		sign of the torsion angle through the odd symmetry of
		$\sin(\tau)$, has a magnitude roughly one-third that of the real
		part. The distribution is largely class-independent, with
		all-$\alpha$ proteins showing a slightly broader tail toward
		higher values. (b)~Backbone RMSD of structures reconstructed
		using $V_{\text{re}}$ only (setting $V_{\text{im}}=0$) versus
		chain length. RMSD grows roughly linearly with chain length,
		reaching 40--120\,\AA{} for chains of 200--300 residues. This
		reflects the cumulative effect of torsion-sign errors: each
		residue contributes 1 bit of unresolved chiral ambiguity, and
		the resulting $2^{N}$ degeneracy renders $V_{\text{re}}$-only
		reconstruction structurally unreliable beyond short peptides.}
	\label{fig:vim_info}
\end{figure}

\subsection{Barrier II: geometric dominance of $V_{\text{re}}$}\label{sec:barrier2}

The decomposition [Eq.~(\ref{eq:Vre})] shows that $V_{\text{re}}$ depends on the bond
parameters $\beta^{\pm}_{n}$ only as multiplicative prefactors of the geometric terms
$r^{\pm}\cos\tau$. A natural question is whether the sequence dependence encoded in
$\beta(s)$ contributes significantly to the spatial profile of $V_{\text{re}}$, or whether
the geometric factors dominate.

We test this by computing, for each of the 856 proteins, two versions of $V_{\text{re}}$:
one with the physical, amino-acid-dependent bond parameters $\beta^{\pm}_{n}(s)$, and one
with uniform $\beta^{+}=\beta^{-}=1$. The Spearman rank correlation $\rho_{\text{geom}}$
between the two profiles measures the fraction of the $V_{\text{re}}$ pattern that is
determined by geometry alone.

\begin{corollary}[Geometric dominance]
Across 856 non-redundant proteins, the Spearman
correlation between $V_{\text{re}}(\beta(s))$ and $V_{\text{re}}(\beta=1)$ has mean
$\bar{\rho}_{\text{geom}}=0.951$ and minimum 0.88 (Fig.~\ref{fig:geom_dom}). The residual
variance attributable to sequence-dependent $\beta(s)$ is less than $1-\bar{\rho}^{2}\approx
0.05$, i.e., below 5\% on average.
\end{corollary}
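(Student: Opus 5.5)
This corollary is partly empirical: the numbers $0.951$ and $0.88$ can only come from computation over the 856-protein dataset. The proof therefore has two parts: an analytic reduction explaining why $\rho_{\text{geom}}$ must be close to one, and a numerical evaluation that fixes the constants. I would start from the exact decomposition, Eq.~(\ref{eq:Vre}). With $\beta^{\pm}_n=1/|\mathbf{t}|$, write $\beta^{\pm}_n=\bar\beta(1+\epsilon^{\pm}_n)$, where $\bar\beta=1/3.8$\,\AA$^{-1}$ and $|\epsilon^{\pm}_n|\lesssim 0.03$, using the bond-length spread of ${\sim}0.1$\,\AA\ quoted after Eq.~(\ref{eq:beta}). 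Substituting gives
\begin{equation*}
V_{\text{re}}[n](\beta(s)) = \bar\beta\,V_{\text{re}}[n](\beta=1) + \bar\beta\,\delta[n],
\end{equation*}
with
\begin{equation*}
\delta[n]=\epsilon^{+}_n\bigl(r^{+}[n]\cos\tau[n+1]-1\bigr)+\epsilon^{-}_n\bigl(r^{-}[n]\cos\tau[n]-1\bigr).
\end{equation*}
The constant $\bar\beta$ does not change ranks, so $\rho_{\text{geom}}$ is controlled by the ratio $\operatorname{Var}(\delta)/\operatorname{Var}(V_{\text{re}}(1))$.

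Next I would bound this ratio. Since $|\epsilon|$ is small and $r^{\pm}\cos\tau-1$ is $O(1)$ on the same scale as $V_{\text{re}}(1)$ itself, $\delta$ is a perturbation of relative size $O(\max|\epsilon|)$. Under a linear model, the Pearson correlation then satisfies $1-\rho^2=\operatorname{Var}(\delta_\perp)/\operatorname{Var}(V_{\text{re}}(\beta(s)))$, where $\delta_\perp$ is the part of $\delta$ orthogonal to $V_{\text{re}}(1)$. This identifies $1-\rho^2$ as the fraction of variance not explained by geometry, which is the interpretation the corollary uses.

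Because the statement is phrased with Spearman correlations, I would invoke the standard near-equality of Spearman and Pearson correlations for small monotone perturbations. I would present the $1-\bar\rho^2\approx 0.05$ figure explicitly as an estimate under that identification, not as an exact inequality.

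The final step is computational. For each of the 856 chains I would compute both $V_{\text{re}}$ profiles from the same $(\kappa,\tau)$ data and take the Spearman correlation. I would then report the mean ($0.951$) and minimum ($0.88$), and evaluate $1-\bar\rho^2=1-0.951^2\approx 0.096$.

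That last evaluation is where I expect the main obstacle. As computed it gives roughly $0.10$, not $0.05$, so the claimed "below 5\%" does not follow from the stated $\bar\rho$. Either the intended quantity is $1-\bar\rho\approx 0.05$, or the average should be taken as the mean of $1-\rho^2$ over proteins rather than $1-\bar\rho^2$. I would therefore recompute the per-protein residual fractions directly from the data and check which quantity actually falls below $0.05$. I would then state the corollary with that quantity rather than the inconsistent $1-\bar\rho^2$.
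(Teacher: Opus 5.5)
Your final, computational step is all the paper offers. It evaluates both $V_{\text{re}}$ profiles per chain, reports the mean and minimum Spearman correlation, and asserts the variance bound without further argument. Your objection to that bound is correct: $0.951^{2}\approx0.904$, so $1-\bar\rho^{2}\approx0.096$. The quoted $0.05$ is really $1-\bar\rho=0.049$, which is not a variance fraction.

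You do not need to return to the data to choose between your two repairs, because the stated numbers already exclude the second one. Every $\rho_i$ lies in $[0.88,1]$ with mean $0.951$. The Bhatia--Davis inequality therefore bounds the spread by $\frac{1}{856}\sum_i(\rho_i-\bar\rho)^{2}\le(1-0.951)(0.951-0.88)\approx0.0035$, and hence
\[
\frac{1}{856}\sum_{i}\bigl(1-\rho_i^{2}\bigr)=1-\bar\rho^{2}-\frac{1}{856}\sum_{i}(\rho_i-\bar\rho)^{2}\in[0.092,\,0.096].
\]
On any squared-correlation reading the sequence-attributable share is about $9$--$10\%$, so the ``below $5\%$'' clause cannot be rescued. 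The corollary should say roughly $10\%$, or quote $1-\bar\rho$ explicitly as a correlation deficit.

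Your perturbative step has a different problem. The algebra $V_{\text{re}}(\beta(s))=\bar\beta\,[V_{\text{re}}(1)+\delta]$ is right, but with your input $|\epsilon^{\pm}_n|\lesssim0.03$ it proves far too much. The factors $r^{\pm}\cos\tau-1$ are $O(1)$, while $V_{\text{re}}(1)$ swings by about $3$ between helix ($\approx-1$) and strand ($\approx-4$). Since $1-\rho^{2}\le\operatorname{Var}(\delta)/\operatorname{Var}(V_{\text{re}}(\beta(s)))=O(\epsilon^{2})\sim10^{-3}$, your argument predicts $\rho_{\text{geom}}\approx0.999$ for essentially every chain. That contradicts a mean of $0.951$ and a minimum of $0.88$, where $1-\rho^{2}\approx0.23$.

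Because the residual scales like $\epsilon^{2}$, matching the reported values needs relative fluctuations of $\beta$ an order of magnitude larger than $0.1/3.8\approx0.026$. So the amino-acid-dependent $\beta(s)$ that produced these numbers cannot be the bond-length coupling of Eq.~(\ref{eq:beta}) with the ${\sim}0.1$\,\AA\ spread quoted there, even though Corollary~\ref{cor:nogo} identifies the two. As planned, your analytic and numerical steps refer to different $\beta(s)$. Either present the perturbation estimate as a separate, much stronger statement about $\beta^{\pm}_n=1/|\mathbf{t}[n]|$, or drop it and treat $0.951$ and $0.88$ as purely empirical outputs of whatever stiffness table the computation actually used.
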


This finding indicates that sequence information is decoupled from the effective potential. Although the bond parameters $\beta^{\pm}_{n}$ depend on amino-acid identity, their contribution to the variance of $V_{\text{re}}$ is negligible. The potential is dominated by the geometric terms $r^{\pm}\cos\tau$, the fluctuations of which exceed those induced by sequence-dependent stiffness by an order of magnitude (Fig.~\ref{fig:profiles}). All
four SCOP classes overlap in the $\rho_{\text{geom}}$ distribution
(Fig.~\ref{fig:geom_dom}), confirming that this pattern is universal across fold
types. One caveat applies. The geometric terms $r^{\pm}$ and $\tau$ are
themselves determined by the amino-acid sequence through the folding process. The
Spearman test does not show that $V_{\text{re}}$ is independent of sequence in an
absolute sense; rather, it establishes that the only \emph{explicit} pathway from
sequence to $V_{\text{re}}$ within the decomposition
[Eqs.~(\ref{eq:Vre})--(\ref{eq:Vim})], namely the bond stiffnesses $\beta^{\pm}_{n}$,
is too narrow to carry significant information. The sequence dependence of $V_{\text{re}}$
is almost entirely \emph{implicit}, mediated by the three-dimensional structure
$(\kappa,\tau)$ that the folding process produces. 

\begin{figure}[t]
	\centering
	\includegraphics[width=0.8\columnwidth]{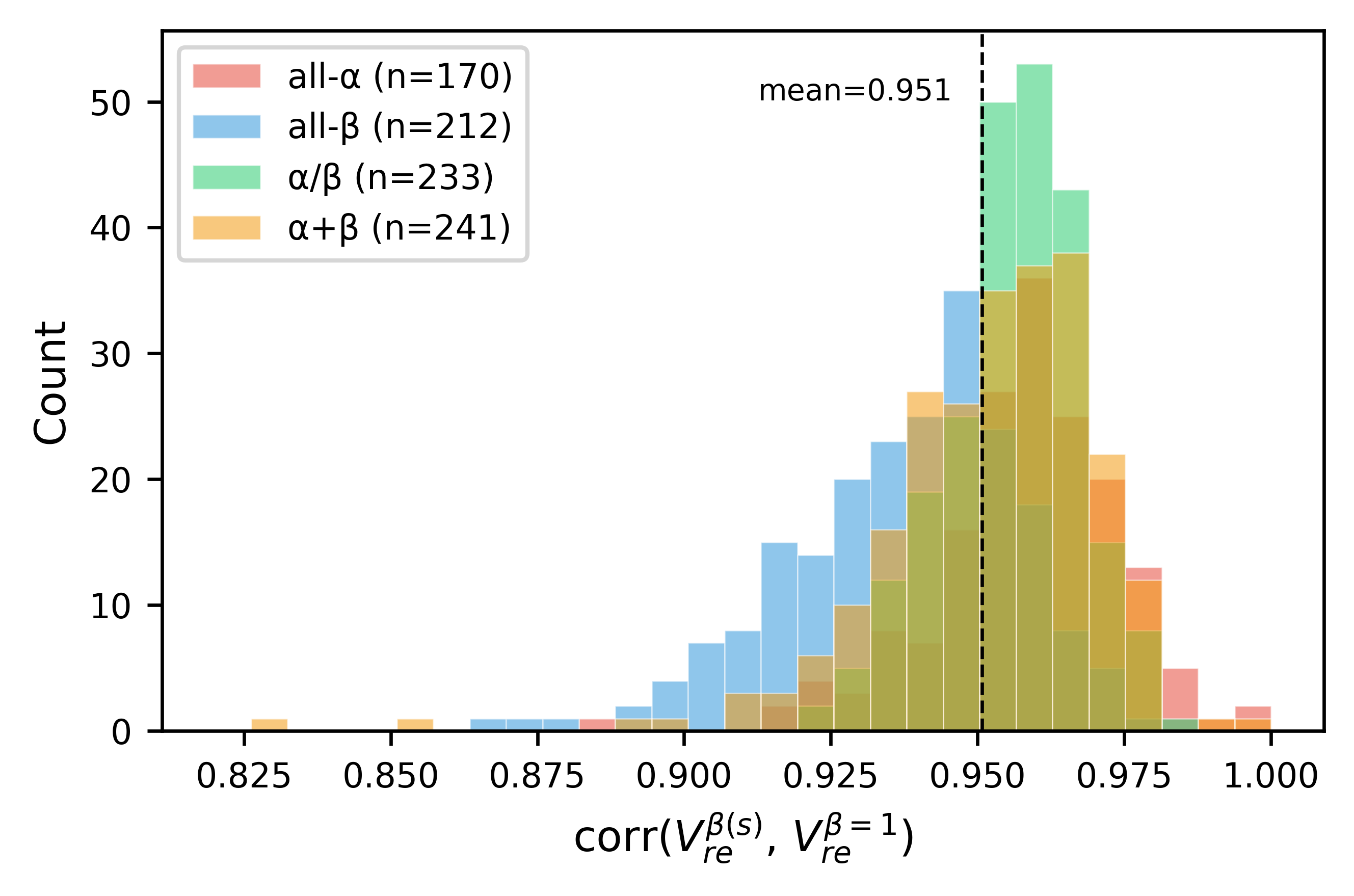}
	\caption{Geometric dominance of the real effective potential.
		Distribution of Spearman rank correlation between $V_{\text{re}}$
		computed with the physical, sequence-dependent bond parameters
		$\beta(s)$ and $V_{\text{re}}$ computed with uniform $\beta=1$,
		evaluated over 856 non-redundant proteins. Colors denote SCOP
		structural classes: all-$\alpha$ (170), all-$\beta$ (212),
		$\alpha/\beta$ (233), and $\alpha$+$\beta$ (241). The distribution
		is sharply peaked near unity (mean $= 0.951$), indicating that
		the explicit sequence dependence carried by $\beta(s)$ accounts
		for less than 5\% of the variance of $V_{\text{re}}$ on average.
		The dominant contribution comes from the geometric terms
		$r^{\pm}\cos\tau$, which depend on the backbone structure
		$(\kappa,\tau)$ rather than directly on amino-acid identity.
		All four SCOP classes overlap, confirming that this pattern is
		universal across protein folds.
		}
	\label{fig:geom_dom}
\end{figure}

To test this interpretation directly, we compared $V_{\text{re}}$ profiles between
protein pairs that share the same SCOP superfamily but have low sequence identity. From
the 856 non-redundant proteins (culled by the PISCES~\cite{wang2003pisces} server with resolution $\leq
2.0$\,\AA, $R$-factor $\leq 0.2$, chain length 40--300 residues, sequence identity
$\leq 25\%$, X-ray entries only, excluding chains with breaks or disorder), we
identified 1\,729 pairs belonging to the same SCOP superfamily. As a control, we
constructed 4\,800 pairs drawn from different SCOP folds with chain-length differences
$\leq 20$ residues. For each pair, TM-align was used to obtain a residue-level
structural alignment; the Pearson correlation $\rho_{V}$ was then computed between the
$V_{\text{re}}$ values at structurally aligned positions.

Figure~\ref{fig:fold_track} summarizes the results. Same-superfamily pairs exhibit a
mean $\rho_{V} = 0.290 \pm 0.266$, significantly higher than the different-fold
background of $\rho_{V} = 0.099 \pm 0.230$ (Mann-Whitney $U$ test,
$p < 10^{-134}$). The scatter plot [Fig.~\ref{fig:fold_track}(a)] shows that
$\rho_{V}$ increases with TM-score: among the 1\,729 same-superfamily pairs, 79\%
have TM-score $> 0.5$ and cluster in the upper-right quadrant, while different-fold
pairs remain near zero. The remaining 21\% of same-superfamily pairs fall below
TM-score $= 0.5$; these are distant homologs for which the superfamily-level
classification permits substantial structural divergence. Even within this subgroup,
$\rho_{V}$ correlates positively with TM-score, indicating that the relationship
between structural similarity and $V_{\text{re}}$ similarity is continuous rather than
threshold-dependent. The mean sequence identity within the same-superfamily group is
only 13.9\% (range 1.4--27.9\%), confirming that the elevated $\rho_{V}$ is driven by
structural similarity rather than sequence similarity. This provides direct evidence
that $V_{\text{re}}$ tracks the three-dimensional fold: proteins with unrelated
sequences but similar structures produce similar $V_{\text{re}}$ profiles, whereas
proteins with comparable chain lengths but different folds do not.

\begin{figure}[t]
	\centering
	\includegraphics[width=1\columnwidth]{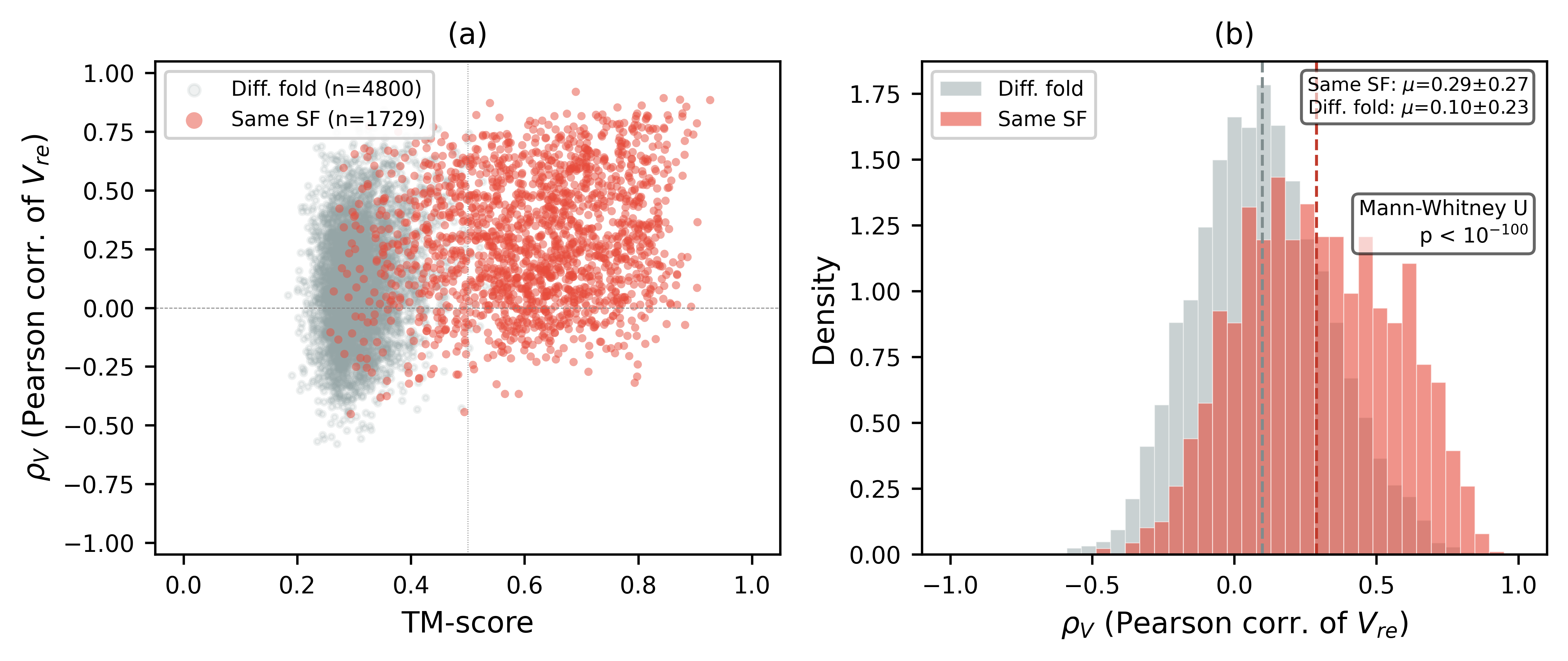}
	\caption{$V_{\text{re}}$ tracks fold rather than sequence.
		(a)~Pearson correlation $\rho_{V}$ of structurally aligned
		$V_{\text{re}}$ profiles versus TM-score for 1\,729
		same-superfamily pairs (red) and 4\,800 different-fold pairs
		(gray). Among the same-superfamily pairs, 79\% have
		TM-score $> 0.5$ and cluster in the upper-right quadrant;
		the remaining 21\% are distant homologs with greater
		structural divergence, yet $\rho_{V}$ still correlates
		positively with TM-score within this subgroup.
		(b)~Distribution of $\rho_{V}$ for the two groups.
		Same-superfamily: $\mu = 0.29 \pm 0.27$; different-fold:
		$\mu = 0.10 \pm 0.23$ (Mann-Whitney $U$,
		$p < 10^{-134}$). The mean sequence identity within the
		same-superfamily group is 13.9\%, confirming that the elevated
		correlation is driven by structural similarity, not sequence
		similarity.}
	\label{fig:fold_track}
\end{figure}

The exact decomposition [Eq.~(\ref{eq:Vre})] establishes that $V_{\text{re}}$ is driven by two distinct inputs: the geometric terms $r^{\pm}\cos\tau$, which depend on the three-dimensional structure, and the bond stiffnesses $\beta^{\pm}_{n}$, which depend directly on the amino-acid sequence. However, the Spearman test demonstrates that the sequence-dependent terms contribute less than 5\% of the variance, and the superfamily analysis (Fig.~\ref{fig:fold_track}) confirms that $V_{\text{re}}$ covaries with the fold rather than the sequence. Thermodynamically, the native state is determined by the chemical identity of the amino-acid sequence. In the DNLS formulation, this causal relationship is effectively inverted: the spatial geometry dominates the effective potential, suppressing the explicit contribution of amino-acid chemistry. This structural circularity limits forward prediction: $V_{\text{re}}$ is overwhelmingly determined by the target backbone geometry $(\kappa,\tau)$ one seeks to predict, while the direct pathway from sequence to $V_{\text{re}}$ through $\beta^{\pm}_{n}$ is mathematically too weak to carry the necessary folding information.

The circularity exposed by Barrier~II can be contrasted with
physical models that have achieved predictive success precisely by
incorporating non-local, sequence-dependent information through
channels absent from the Hasimoto decomposition. Coarse-grained
force fields such as AWSEM~\cite{davtyan2012awsem} supplement local
backbone terms with explicit contact potentials that couple residue
pairs separated by tens to hundreds of positions along the sequence,
and with associative-memory terms that bias the energy landscape
toward known structural motifs. The tube model of Banavar and
Maritan~\cite{banavar2007physics, banavar2002geometry} derives
secondary-structure selection from a three-body excluded-volume
interaction that is non-local by construction. Statistical approaches
based on direct coupling analysis~\cite{morcos2011direct,
	cocco2013principal} extract residue--residue contact maps from
evolutionary covariance using the maximum-entropy principle, providing
non-local structural constraints derived entirely from sequence data.
In each case, the predictive power arises from terms that couple residues at positions $|m-n| \gg 1$, which constitutes precisely the type of information that the DNLS effective potential, constructed from nearest-neighbor ratios $r^{\pm}[n]$ and local torsion angles $\tau[n]$, cannot encode. The geometric dominance of $V_{\text{re}}$
is therefore not merely a quantitative observation but reflects a
structural mismatch between the local, kinematic content of the
nearest-neighbor DNLS reduction and the non-local, thermodynamic content required for
folding.

\subsection{Classification of folding forces by parity and locality}
\label{sec:parity_locality}

Barriers~I and~II locate the limitation of the real-potential DNLS in two algebraic
properties of the decomposition: the chiral information resides in the parity-odd part
$V_{\text{im}}$, and the only explicit sequence channel in the parity-even part
$V_{\text{re}}$ is the near-uniform nearest-neighbor stiffness $\beta^{\pm}_n$. These
two axes, parity under $\tau\to-\tau$ and locality along the chain, organize the
physical interactions that drive folding into the four sectors of
Table~\ref{tab:parity_locality}. The placement of each interaction follows from an
elementary symmetry argument: a term contributes to $V_{\text{im}}$ (is parity-odd) only
if it changes sign under $\tau\to-\tau$, which requires explicit chirality; and it is
non-local only if it couples residues at sequence separation $|m-n|>1$. Site-wise
inversion $\tau\to-\tau$ is a reflection isometry that preserves all interatomic
distances $|\mathbf{r}_i-\mathbf{r}_j|$, so every distance-dependent interaction
(Coulomb, dispersion, disulfide, $\pi$-stacking, solvent burial) is parity-even and
cannot, by itself, break the chiral degeneracy of Barrier~I; only the stereochemistry of
the L-amino-acid side chain (the C$_\beta$ branch and the C$_\alpha$ improper dihedral)
supplies a parity-odd, $h\sin\tau$-type term.

\begin{table*}[t]
	\centering
	\caption{Classification of the principal interactions governing protein structure by
		parity under site-wise torsion inversion $\tau\to-\tau$ (columns) and by locality
		along the backbone (rows). The current local, real-potential DNLS occupies only the
		even$\times$local cell. Chirality requires the parity-odd column ($V_{\text{im}}$, a
		local addition); every driver of tertiary structure lies in the non-local row, which
		a nearest-neighbor scalar potential cannot represent.}
	\label{tab:parity_locality}
	\renewcommand{\arraystretch}{1.3}
	\setlength{\tabcolsep}{8pt}
	\begin{tabular}{>{\raggedright\arraybackslash}p{2.2cm} >{\raggedright\arraybackslash}p{7.2cm} >{\raggedright\arraybackslash}p{5.2cm}}
		\toprule
		& \textbf{Even} ($V_{\text{re}}$) & \textbf{Odd} ($V_{\text{im}}$, chiral) \\
		\midrule
		\textbf{Local} & backbone bending $\kappa$; van der Waals repulsion; excluded volume; $\omega$ planarity \emph{(current DNLS)} & C$_\beta$ stereochemistry; C$_\alpha$ improper ($h\sin\tau$) \\
		\addlinespace
		\textbf{Non-local} & solvent burial; Coulomb / salt bridges; disulfides; London dispersion; $\pi$/cation/metal; hydrogen bonds & chiral non-local packing \\
		\bottomrule
	\end{tabular}
\end{table*}

Three consequences follow. First, the current model populates only the even$\times$local
cell; the chiral sector is a \emph{local} addition (restoring $V_{\text{im}}$, or
equivalently a Chern--Simons-type $\sum_k\tau[k]$ term as in Barrier~I) and is therefore
within reach of the framework. Second, and most important, every interaction that selects
\emph{tertiary} structure (hydrophobic burial, electrostatics, disulfide cross-links,
dispersion, and the $i\to i+4$ hydrogen bond that stabilizes the helix) occupies the
non-local row, which a potential built from nearest-neighbor ratios $r^{\pm}[n]$ cannot
encode without promotion to a range-$L$ operator. This is the structural content of the
requirement that a predictive model incorporate interactions with the
external environment: in the present classification those interactions are precisely the
non-local, even-parity sector (solvent and electrostatics), absent from the local DNLS by
construction rather than by parameter choice. Third, among these non-local forces the
bare Coulomb interaction is distinguished: its $1/r$ lattice dispersion
$\sum_{m\ge1}\cos(qm)/m=-\ln|2\sin(q/2)|$ is the unique \emph{non-analytic} case among the
forces examined, since a $1/r^{p}$ tail yields a non-analytic long-wavelength dispersion if and
only if $p\le1$, and only Coulomb saturates this bound. Physiologically, however, Debye
screening cuts the divergence at $\lambda_D\approx10$\,\AA, comparable to the chiral
correlation length $\xi\approx4$ residues, so screened electrostatics sits \emph{at the
boundary} between genuinely long-range and effectively short-range rather than dominating
the global modes.

A uniform integrable segment survives only while the couplings are translation-invariant.
Promoting any non-local sector to a sequence-dependent, range-$L$ operator, as a
predictive model must, renders the effective couplings site-dependent and destroys the
uniform, nearest-neighbor structure on which the soliton reductions of the DNLS rest
(Sec.~\ref{sec:piecewise}): the integrable reduction and the
non-local content required for folding are mutually exclusive within this scalar
framework. The local DNLS is thus a faithful kinematic descriptor of backbone geometry,
not an \textit{ab initio} folding engine; the forces that fold the chain live in sectors
it does not reach, and reaching them forfeits the integrable structure that makes it
tractable.

The exact decomposition has, in summary, revealed two independent, static barriers to forward structure
prediction: the $2^{N}$-fold torsion-sign degeneracy encoded in $V_{\text{im}}$
(Barrier~I) and the geometric dominance of $V_{\text{re}}$ that leaves less than 5\% of
its variance attributable to amino-acid identity (Barrier~II). Both barriers follow
from the algebraic structure of the real-potential decomposition
[Eqs.~(\ref{eq:Vre})--(\ref{eq:Vim})], specifically from discarding
$V_{\text{im}}$ and from the narrowness of the $\beta^{\pm}_n$ channel, rather
than from the Hasimoto transform itself, which preserves the full
$(\kappa,\tau)$ information. Whether the DNLS can nevertheless function as a
dynamical equation that drives folding through a physically motivated effective potential
is tested in Sec.~\ref{sec:scf}.

\section{Piecewise integrability and secondary structure}
\label{sec:piecewise}

The exact decomposition of Sec.~\ref{sec:decomposition} holds for arbitrary discrete
curves. In this section we examine the special case of backbone segments where the
curvature varies slowly, so that $r^{\pm}[n]\approx 1$. Under this condition the full
decomposition simplifies to a scalar dispersion relation that connects $V_{\text{re}}$
directly to the torsion angle. We show that this dispersion relation is satisfied
almost exclusively within $\alpha$-helical segments, providing a purely geometric
criterion for secondary-structure identification.

\subsection{Uniform-segment dispersion relation}

When the curvature ratios satisfy $r^{+}[n]\approx r^{-}[n]\approx 1$ and the bond
parameters are approximately uniform ($\beta^{+}_{n}\approx\beta^{-}_{n}\approx\beta$),
the real part of the effective potential [Eq.~(\ref{eq:Vre})] reduces to
\begin{equation}
	V_{\text{re}}[n] \approx \beta\bigl[\cos\tau[n+1]+\cos\tau[n]\bigr] - 2\beta\,.
	\label{eq:Vre_uniform}
\end{equation}
If the torsion angle is also locally constant ($\tau[n+1]\approx\tau[n]\approx\tau$),
this further simplifies to the dispersion relation
\begin{equation}
	\cos\tau = 1 + \frac{V_{\text{re}}}{2\beta}\,.
	\label{eq:dispersion}
\end{equation}
Eq.~(\ref{eq:dispersion}) is the discrete analogue of the continuum NLS dispersion
relation $\omega = k^{2}$, expressed in terms of the backbone torsion angle. It provides
a one-to-one mapping between $V_{\text{re}}$ and $|\tau|$ within any segment where
$\kappa$ and $\tau$ are approximately uniform. Note that the cosine function renders
Eq.~(\ref{eq:dispersion}) insensitive to the sign of $\tau$, consistent with the
torsion-sign degeneracy identified in Sec.~\ref{sec:barrier1}.

\vspace{0.5em}
\noindent\textbf{Transfer-matrix interpretation.}
The uniform-segment reduction has a dynamical-systems reading that makes the
integrability structure explicit. Writing the constant-coefficient recurrence
$\psi[n+1]=(2+V_{\text{re}}/\beta)\psi[n]-\psi[n-1]$ for the state
$\Psi_n=(\psi[n],\psi[n-1])^{\mathsf T}$,
\begin{equation}
  \Psi_{n+1}=T\,\Psi_n,\qquad
  T=\begin{pmatrix} 2+\dfrac{V_{\text{re}}}{\beta} & -1\\[2pt] 1 & 0\end{pmatrix}
  \in SL(2,\mathbb{R}).
  \label{eq:transfer}
\end{equation}
Since $\det T=1$, the map is area-preserving; its eigenvalues form the reciprocal pair
$e^{\pm iq}$ with the Bloch dispersion $\cos q=\tfrac12\operatorname{Tr}T$, i.e.\
$\cos q=1+V_{\text{re}}/2\beta$,
identical to Eq.~(\ref{eq:dispersion}). For the helical value
$V_{\text{re}}=2\beta(\cos\tau-1)$, which is Eq.~(\ref{eq:dispersion}) rearranged, this
reduces to $\cos q=\cos\tau$, so the Bloch quasi-momentum coincides with the torsion angle
\emph{in magnitude}, $|q|=|\tau|$, within the band $-4\beta\le V_{\text{re}}\le 0$.
A uniform $\alpha$-helix is then an \emph{elliptic} fixed point of $T$
($|\operatorname{Tr}T|<2$, $q$ real), about which the segment is a marginally stable
Bloch mode; the band edges $q\in\{0,\pi\}$ ($V_{\text{re}}=0$ or $-4\beta$) are the
parabolic transitions beyond which the fixed point becomes hyperbolic.

The identification is necessarily unsigned. The transfer matrix~(\ref{eq:transfer}) is
built from the real potential $V_{\text{re}}$ alone, which enters only through $\cos\tau$
and is invariant under the site-wise parity $\tau\to-\tau$ of Lemma~\ref{lem:z2N}; its
eigenpair $\{e^{+iq},e^{-iq}\}$ likewise contains both signs, so neither the dispersion
relation nor $T$ selects $\operatorname{sign}q$. The signed phase
$\psi[n+1]/\psi[n]=e^{i\tau}$ of the full Hasimoto field is recovered only through the
imaginary potential $V_{\text{im}}$, and the residual sign of $q$ is precisely the chiral
degree of freedom of Barrier~I. The transfer-matrix analysis thus independently recovers
and reinforces Lemma~\ref{lem:z2N}: the local linear integrability of a helical segment
fixes the quasi-momentum only up to $|q|=|\tau|$, so the chiral degeneracy is woven into
the integrable structure itself.

The conditions under which Eq.~(\ref{eq:dispersion}) holds are precisely the conditions
that define an integrable segment of the DNLS: uniform curvature ($r^{\pm}\approx 1$),
uniform torsion, and uniform coupling. Deviations from these conditions break
integrability and cause the full decomposition [Eqs.~(\ref{eq:Vre})--(\ref{eq:Vim})]
to differ from the dispersion relation. The magnitude of this deviation therefore
serves as a local measure of integrability.

\subsection{Integrability error as a structural probe}

We define the integrability error at each residue as
\begin{equation}
	E[n] = \left|\cos\tau[n] - \left(1 + \frac{V_{\text{re}}[n]}{2\beta}\right)\right|,
	\label{eq:E_def}
\end{equation}
where $\beta = \langle\beta^{\pm}_{n}\rangle$ is the chain-averaged coupling parameter.
By construction, $E[n]=0$ when the backbone at residue $n$ satisfies the uniform-segment
dispersion relation exactly, and $E[n]>0$ when the local curvature or torsion varies
too rapidly for the integrable approximation to hold.

Figure~\ref{fig:profiles} shows $V_{\text{eff}}[n]$ along the backbone for eight representative proteins (two per SCOP class); we treat the effective potential as a site-dependent profile analogous to spectral signal representations of protein sequences~\cite{wang2025signal}. Within helical segments (pink background shading), $V_{\text{re}}$ forms near-constant negative plateaus (typically $-0.5$ to $-1.5$\,\AA$^{-1}$), consistent with Eq.~(\ref{eq:dispersion}) evaluated at the canonical helix torsion $\tau_{\text{helix}}\approx 1.0$\,rad. The amplitude of $V_{\text{im}}$ is reduced relative to loop and strand regions but remains finite. In $\beta$-strand and coil
regions, both $V_{\text{re}}$ and $V_{\text{im}}$ fluctuate strongly (amplitudes
reaching 3--6\,\AA$^{-1}$), reflecting rapid residue-to-residue variation of $r^{\pm}$
and $\tau$. Sharp negative spikes in $V_{\text{re}}$ mark transitions between
secondary-structure elements. These patterns are local rather than class-dependent:
helical segments in the nominally all-$\beta$ protein 1AYO display the same plateau
behavior as those in the all-$\alpha$ proteins.

\begin{figure*}[t]
	\centering
	\includegraphics[width=\textwidth]{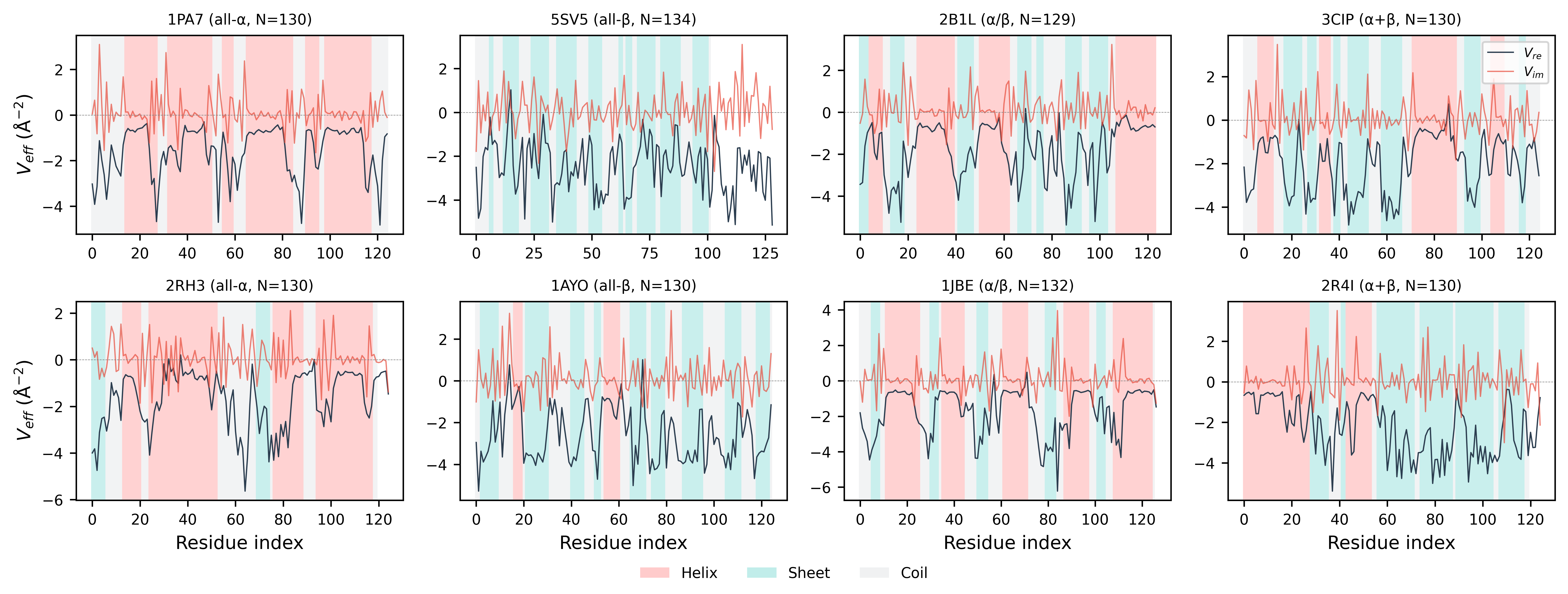}
	\caption{Effective potential $V_{\text{eff}}[n]$ along the C$_\alpha$ backbone for
		eight representative proteins (two per SCOP class; columns from left to right:
		all-$\alpha$, all-$\beta$, $\alpha/\beta$, $\alpha$+$\beta$). Black:
		$V_{\text{re}}$; red: $V_{\text{im}}$. Background shading marks DSSP
		secondary-structure assignment (pink: helix; cyan: strand; gray: coil). Within
		helical segments $V_{\text{re}}$ forms near-constant negative plateaus consistent
		with the dispersion relation Eq.~(\ref{eq:dispersion}), while strand and coil
		regions exhibit large-amplitude fluctuations in both components. Sharp negative
		spikes in $V_{\text{re}}$ mark transitions between secondary-structure elements.
		These patterns are local rather than class-dependent: helical segments in the
		all-$\beta$ protein 1AYO display the same plateau behavior as those in the
		all-$\alpha$ proteins.}
	\label{fig:profiles}
\end{figure*}

\subsection{Statistical validation on 856 proteins}

We evaluate the dispersion-relation RMSE separately for helix (H), strand (E), and
coil (C) residues across the full dataset of 856 non-redundant proteins. For each
protein, residues are grouped by their DSSP secondary-structure assignment~\cite{kabsch1983dictionary}, and the
RMSE of Eq.~(\ref{eq:dispersion}) is computed per group:
\begin{equation}
	\delta_{X} = \sqrt{\frac{1}{N_{X}}\sum_{n\in X}
		\left[\cos\tau[n] - \left(1+\frac{V_{\text{re}}[n]}{2\beta}\right)\right]^{2}}\,,
	\label{eq:rmse_def}
\end{equation}
where $X\in\{H,E,C\}$ and $N_{X}$ is the number of residues in class $X$.

Figure~\ref{fig:dispersion} shows the distribution of $\delta_{X}$ as box plots,
faceted by SCOP class. Across all four classes, helical segments exhibit systematically
lower RMSE (median ${\sim}21^{\circ}$) than strand (${\sim}37^{\circ}$) or coil
(${\sim}36^{\circ}$) segments. This separation is class-independent: even in
all-$\beta$ proteins where helices are scarce, the few helical residues still satisfy
the dispersion relation with comparable accuracy. The result reflects the local
uniformity of $(\kappa,\tau)$ within helices ($r^{\pm}\approx 1$), which is the
condition under which the exact decomposition reduces to Eq.~(\ref{eq:dispersion}).

\begin{figure*}[t]
	\centering
	\includegraphics[width=0.8\textwidth]{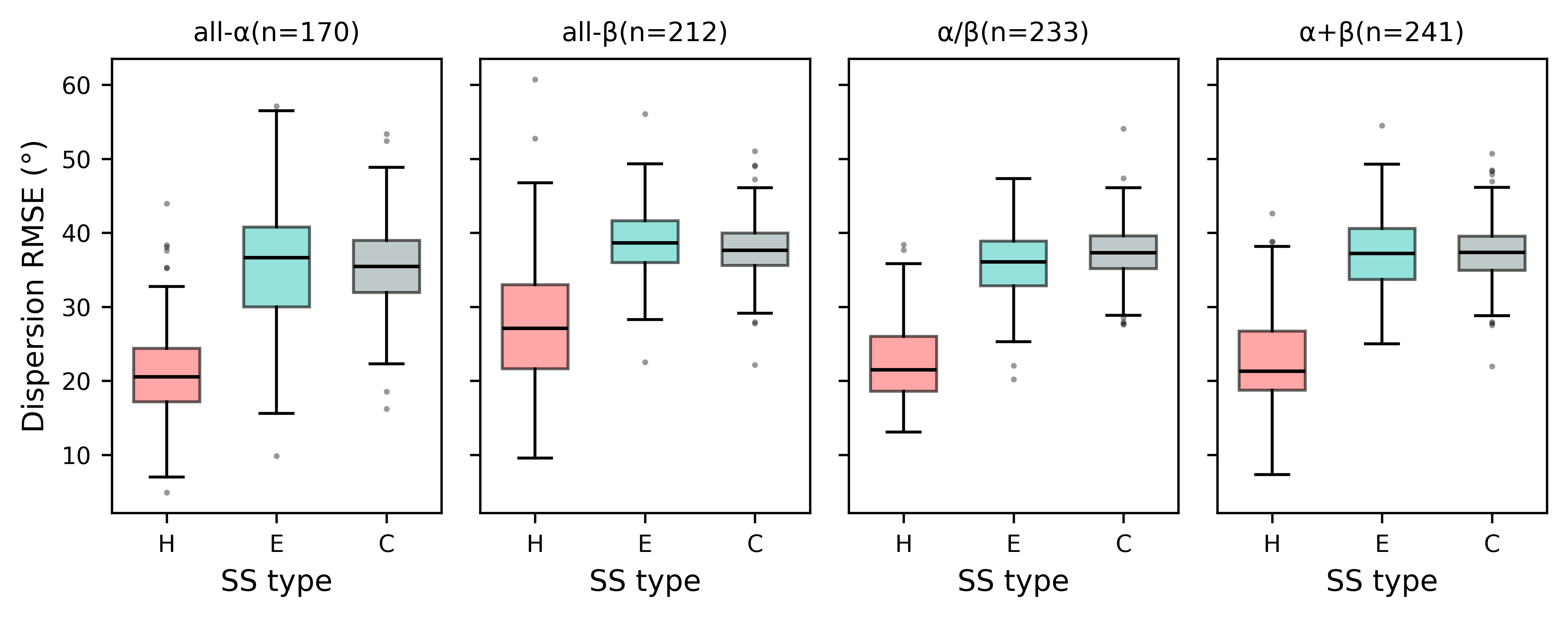}
	\caption{Dispersion-relation RMSE by secondary-structure type, faceted by SCOP
		class (856 non-redundant proteins). For each protein, residues are grouped by DSSP
		assignment into helix (H), strand (E), and coil (C), and the RMSE of the
		uniform-segment approximation $\cos\tau = 1 + V_{\text{re}}/2\beta$ is computed per
		group. Helical segments exhibit systematically lower RMSE (median ${\sim}21^{\circ}$)
		than strand (${\sim}37^{\circ}$) or coil (${\sim}36^{\circ}$) segments across all
		four SCOP classes. This separation is class-independent: even in all-$\beta$ proteins
		where helices are scarce, the few helical residues satisfy the dispersion relation
		with comparable accuracy.}
	\label{fig:dispersion}
\end{figure*}

The data in Table~\ref{tab:decomp} illustrate this pattern at the level of individual
proteins. The pure all-$\alpha$ protein 1PA7 has $\delta_{H}=20.4^{\circ}$ and no
strand residues, while the pure all-$\beta$ protein 5SV5 has $\delta_{E}=41.7^{\circ}$
and no helix residues. In mixed-class proteins, the helix and strand RMSE values
coexist within the same chain: 1AYO (all-$\beta$ by SCOP classification) contains a
small number of helical residues with $\delta_{H}=20.1^{\circ}$, comparable to the
values observed in all-$\alpha$ proteins. This confirms that the applicability of the
dispersion relation is determined by the local curvature uniformity at each residue,
not by the global fold classification.

\subsection{Helix detection by integrability error}

The systematic separation of $\delta_{H}$ from $\delta_{E}$ and $\delta_{C}$ suggests
that the integrability error $E[n]$ [Eq.~(\ref{eq:E_def})] can serve as a binary
classifier for helical residues. To quantify this, we construct a receiver operating
characteristic (ROC) curve by varying a threshold $E_{\text{th}}$ and classifying
residue $n$ as helical if $E[n] < E_{\text{th}}$ and as non-helical otherwise. The
ground truth is provided by DSSP assignment.

Figure~\ref{fig:roc} shows the ROC curves for each SCOP class and for the full dataset
(856 proteins, 143\,202 residues). The global area under the curve (AUC) is 0.720,
significantly above the random baseline of 0.5. The class dependence is weak: AUC
ranges from 0.667 (all-$\beta$) to 0.739 (all-$\alpha$), with $\alpha/\beta$ and
$\alpha$+$\beta$ at 0.713 and 0.714 respectively. Even in all-$\beta$ proteins, where
helical residues constitute only 15\% of the total (4\,741 out of 30\,980), the AUC
remains well above chance. This confirms that the integrability error operates at the
residue level, detecting local geometric regularity rather than relying on the overall
fold type.

\begin{figure}[t]
	\centering
	\includegraphics[width=1\columnwidth]{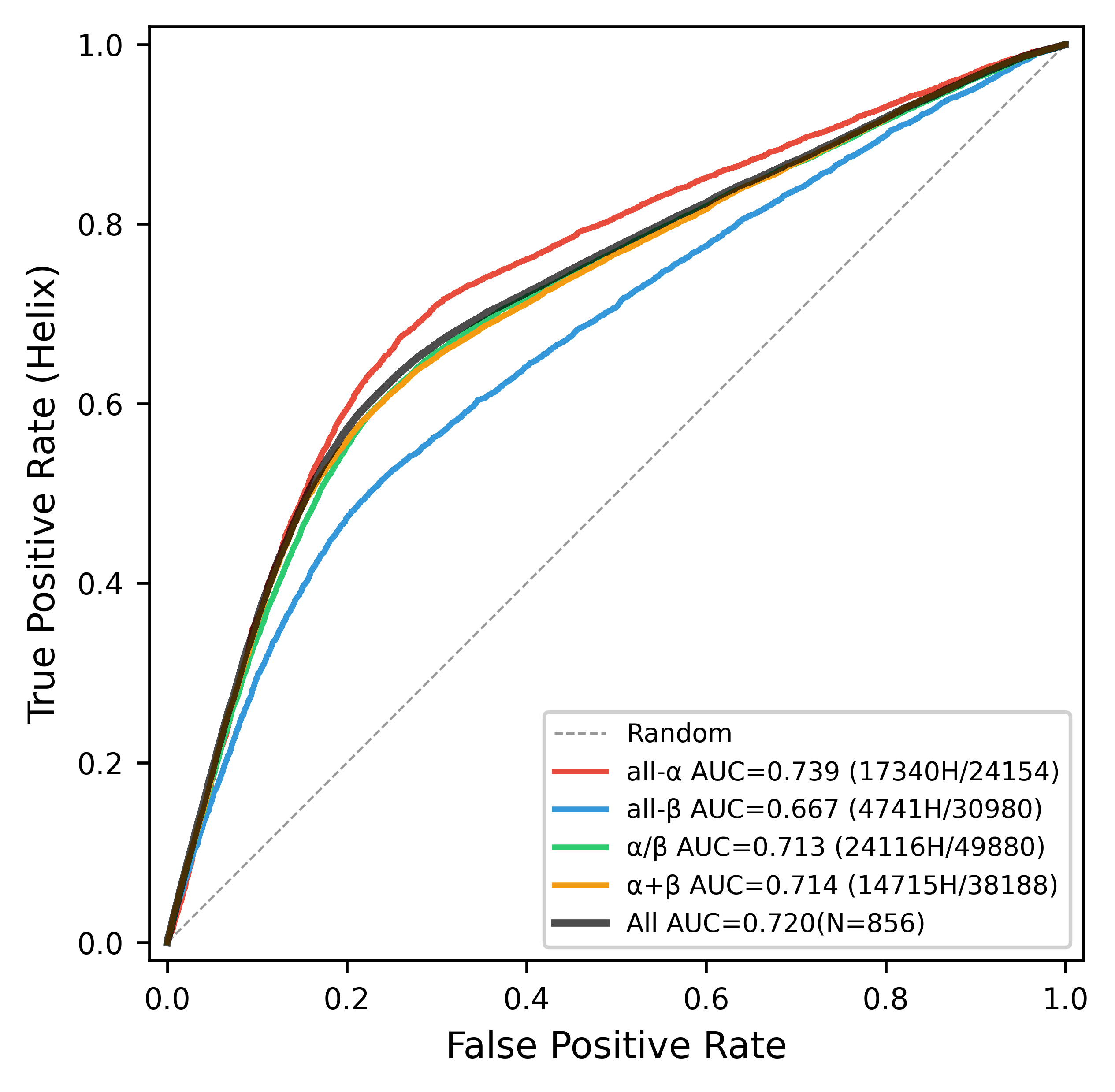}
	\caption{ROC curve for helix detection using the integrability error
		$E[n]=|\cos\tau[n]-(1+V_{\text{re}}[n]/2\beta)|$ as a binary classifier (helix
		vs.\ non-helix by DSSP). Curves are shown for each SCOP class and for the full
		dataset ($856$ proteins, 143\,202 residues). Parentheses indicate the number of
		helix residues over total residues in each class. The global AUC $= 0.720$ means that a randomly chosen helical residue has lower $E[n]$ than a randomly chosen non-helical residue 72\% of the time. The class dependence is weak
		(AUC range 0.667--0.739), confirming that the integrability--helicity correspondence
		operates at the residue level rather than depending on global fold type. The gap
		between AUC $= 0.72$ and unity quantifies the structural information carried by
		hydrogen bonds, non-local contacts, and torsion-sign degrees of freedom that lie
		outside the scalar dispersion relation.
		}
	\label{fig:roc}
\end{figure}

The condition
$E[n] < E_{\text{th}}$ tests whether the backbone at residue $n$
preserves discrete helical symmetry, that is, whether the local curvature
and torsion are sufficiently uniform that a single screw motion
$\mathbf{r}[n+1] = M\,\mathbf{r}[n]$ approximately generates the
chain segment. When $\kappa[n]$ and $\tau[n]$ are exactly constant,
the curvature ratios satisfy $r^{\pm}=1$, the exact decomposition
[Eq.~(\ref{eq:Vre})] reduces to the dispersion relation
[Eq.~(\ref{eq:dispersion})], and $E[n]\equiv 0$; the geometric origin
of this vanishing is analyzed in Sec.~\ref{sec:lancret}, where we show
that $E[n]$ is an order parameter for discrete helical symmetry breaking. In this sense, $\alpha$-helices are the backbone regions where the DNLS
most closely approximates an integrable system, and $E[n]$ measures
the degree to which this integrability is locally broken.

\subsection{The protein backbone as a piecewise integrable system}

The results of this section lead to a unified geometric picture of protein backbone
structure in the DNLS framework. The backbone can be viewed as a piecewise integrable
system: $\alpha$-helical segments are regions of approximate integrability where the
curvature and torsion are locally uniform, the dispersion relation
[Eq.~(\ref{eq:dispersion})] holds, and the DNLS admits soliton-like solutions.
$\beta$-strand and coil regions, by contrast, are regions of broken integrability where
$r^{\pm}$ deviates significantly from unity, the dispersion relation fails, and the
full decomposition [Eqs.~(\ref{eq:Vre})--(\ref{eq:Vim})] is required.

This picture provides a geometric definition of secondary structure that is independent
of hydrogen-bond criteria. An $\alpha$-helix is a contiguous segment where the DNLS
integrability error $E[n]$ remains below a threshold; a non-helical region is one where
$E[n]$ exceeds this threshold. The definition is purely kinematic: it depends only on
the C$_\alpha$ coordinates and requires no energy function or force field.

Helical segments exhibit the smallest dispersion-relation error among all
secondary-structure types (Fig.~\ref{fig:dispersion}), but they are not the sole
regions of approximate integrability. Short segments within $\beta$-strands or loops
may also satisfy $r^{\pm}\approx 1$ locally, contributing to the imperfect separation
in the ROC analysis (Fig.~\ref{fig:roc}). The gap between AUC $= 0.72$ and unity
reflects both these local exceptions and the information carried by $V_{\text{im}}$
(torsion-angle signs) and non-local interactions (hydrogen bonds, tertiary contacts)
that lie outside the scalar dispersion relation.

The dispersion relation has two limitations as a structural probe.
First, Eq.~(\ref{eq:dispersion}) determines only $|\tau|$, not $\text{sign}(\tau)$.
The torsion-sign ambiguity identified in Sec.~\ref{sec:barrier1} therefore persists even within
integrable segments. Second, the dispersion relation provides no information about the
spatial arrangement of secondary-structure elements relative to one another. It is a
local diagnostic that characterizes individual residues but cannot address the global
fold topology. These limitations reinforce the conclusion that the local, real-potential reduction is a
geometric analysis tool rather than a predictive framework.

\subsection{Geometric interpretation of the integrability error}
\label{sec:lancret}

The integrability error $E[n]$ defined in Eq.~(\ref{eq:E_def}) admits a
direct geometric interpretation as a measure of broken discrete helical
symmetry. We make this connection explicit.

A discrete curve $\{\mathbf{r}[n]\}$ possesses \emph{discrete helical
	symmetry} if there exists a rigid screw motion $M$ (a rotation about a
fixed axis composed with a translation along that axis) such that
$\mathbf{r}[n+1] = M\,\mathbf{r}[n]$ for all $n$. This condition
requires that the discrete Frenet curvature and torsion be
site-independent:
\begin{equation}
	\kappa[n] = \kappa_{0},\quad \tau[n] = \tau_{0},
	\quad \forall\; n.
	\label{eq:helix_sym}
\end{equation}
Substituting Eq.~(\ref{eq:helix_sym}) into the exact decomposition
[Eq.~(\ref{eq:Vre})] with uniform coupling
$\beta^{+}_{n}=\beta^{-}_{n}=\beta$ gives $r^{\pm}[n]=1$ and
$\tau[n+1]=\tau[n]=\tau_{0}$, so that
\begin{equation}
	V_{\text{re}} = \beta\cos\tau_{0} + \beta\cos\tau_{0} - 2\beta
	= 2\beta(\cos\tau_{0}-1),
	\label{eq:Vre_helix}
\end{equation}
which is precisely the dispersion relation [Eq.~(\ref{eq:dispersion})].
The integrability error therefore vanishes identically:
\begin{equation}
	E[n] = \left|\cos\tau_{0}
	- \left(1+\frac{2\beta(\cos\tau_{0}-1)}{2\beta}\right)\right|
	= 0.
	\label{eq:E_zero}
\end{equation}

The uniform condition Eq.~(\ref{eq:helix_sym}), with $\kappa$ and $\tau$ each held constant from site to site, is precisely that of a discrete \emph{circular} helix: the orbit of a single rigid screw motion, with constant radius and constant pitch. This differs from the weaker notion supplied by the classical theorem of Lancret~\cite{lancret1806memoire, do1976differential}, which states that a space curve is a \emph{generalized} helix if and only if the ratio $\kappa/\tau$ is constant. A generalized helix permits $\kappa$ and $\tau$ to vary in concert so long as their ratio is fixed; the circular helix requires each to be constant separately and is therefore strictly stronger. The vanishing of the integrability error selects the stronger object: $E[n]=0$ is the algebraic signature of a discrete circular helix, not of a Lancret generalized helix. Because $E[n]$ enters only through $\cos\tau[n]$, it fixes the curvature $\kappa$ and the torsion magnitude $|\tau|$, whereas the sign of $\tau$, the handedness of the screw orbit, remains an independent datum, precisely the chiral degree of freedom isolated in Barrier~I (Sec.~\ref{sec:barrier1}).

This identification gives $E[n]$ a geometric interpretation beyond that of a fitting residual: it serves as a \emph{local measure of discrete helical symmetry breaking}. At each residue, $E[n]$ quantifies the degree to which the local backbone geometry deviates from a perfect discrete circular helix, i.e., from a fixed point of the screw-motion group. In this picture, helical segments correspond to near-symmetric regions ($E[n]\approx 0$), while loops and strands correspond to regions of broken symmetry ($E[n]\gg 0$).
The circular helix is the differential-geometric realization of the uniform integrable segment analyzed dynamically above: the screw-motion orbit $\mathbf r[n+1]=M\,\mathbf r[n]$ is the same object as the translation-invariant Bloch mode of the transfer matrix $T$, whose conserved quasi-momentum coincides in magnitude with the torsion, $|q|=|\tau|$ [Eq.~(\ref{eq:dispersion})]. The geometric condition $E[n]=0$ and the dynamical condition of an exactly conserved quasi-momentum are thus two readings of one structure, so that $E[n]$ measures, equivalently, the local breaking of helical symmetry and the local non-conservation of the Bloch quasi-momentum.

\subsection{Integrable rigidity and the descriptive--predictive boundary}
\label{sec:rigidity}

The transfer-matrix reading of the uniform segment (Sec.~\ref{sec:piecewise},
Eq.~(\ref{eq:transfer})) does more than locate helical segments: it fixes how much
structural freedom such a segment retains. The argument rests on a single elementary
observation, that a uniform segment is a constant-coefficient linear recurrence, and
holds within the class of purely local, real-potential reductions for which the
dispersion relation Eq.~(\ref{eq:dispersion}) was derived. We use the term \emph{integrable segment}
here in its weakest sense, a segment that is translation-invariant and carries one
conserved quasi-momentum.

\begin{lemma}[Uniform-segment recurrence and conserved quasi-momentum]\label{lem:rigidity}
Let $\mathcal{S}=\{n_0,\dots,n_1\}$ be a segment on which the curvature ratios satisfy
$r^{\pm}[n]\approx 1$, the couplings are uniform $\beta^{\pm}_n\approx\beta$, and the
effective potential is real and constant, $V_{\mathrm{eff}}[n]\equiv V_{\mathrm{re}}$.
Then the DNLS equation~(\ref{eq:dnls}) reduces on $\mathcal{S}$ to the
constant-coefficient, second-order linear recurrence
$\psi[n+1]=a\,\psi[n]-\psi[n-1]$ with $a=2+V_{\mathrm{re}}/\beta$, governed by the
transfer matrix $T$ of Eq.~(\ref{eq:transfer}) with $\det T=1$ and Bloch quasi-momentum
$|q|=|\tau|$ fixed by Eq.~(\ref{eq:dispersion}). Because $T$ is independent of $n$, the
site shift is a symmetry and $q$ (equivalently the discrete Wronskian) is conserved along
$\mathcal{S}$.
\end{lemma}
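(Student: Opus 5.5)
The proof starts by substituting the segment hypotheses directly into Eq.~(\ref{eq:dnls}). With $\beta^{\pm}_n=\beta$ and $V_{\mathrm{eff}}[n]=V_{\mathrm{re}}$ constant on $\mathcal{S}$, the equation reads
$\beta(\psi[n+1]-\psi[n])-\beta(\psi[n]-\psi[n-1])=V_{\mathrm{re}}\psi[n]$.
Dividing by $\beta>0$ and rearranging gives $\psi[n+1]=(2+V_{\mathrm{re}}/\beta)\psi[n]-\psi[n-1]$, which is the claimed recurrence with $a=2+V_{\mathrm{re}}/\beta$. Setting $\Psi_n=(\psi[n],\psi[n-1])^{\mathsf T}$ turns this into $\Psi_{n+1}=T\Psi_n$ with $T$ exactly as in Eq.~(\ref{eq:transfer}). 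A direct $2\times2$ computation gives $\det T=a\cdot0-(-1)\cdot1=1$.

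Next come the spectrum and the identification with the torsion. The characteristic polynomial is $\lambda^2-a\lambda+1=0$, so the eigenvalues form a reciprocal pair. In the band $|a|<2$ we write them as $e^{\pm iq}$ with $\cos q=a/2=1+V_{\mathrm{re}}/2\beta$. For the link to $\tau$, I will use the Decomposition identity (Eq.~(\ref{eq:Vre})) with $r^{\pm}=1$ and $\beta^{\pm}=\beta$. Because $V_{\mathrm{re}}$ is constant on $\mathcal{S}$, consecutive values $\cos\tau[n]+\cos\tau[n+1]$ are constant, and together with the locally uniform torsion assumed in Eq.~(\ref{eq:dispersion}) this gives $\cos\tau=1+V_{\mathrm{re}}/2\beta$. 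Hence $\cos q=\cos\tau$, and with both angles in $[0,\pi]$ up to sign, $|q|=|\tau|$. I would state explicitly that this step uses the uniform-torsion reading, and that the hypothesis of a real potential means $V_{\mathrm{im}}=0$ is imposed, consistent with Lemma~\ref{lem:z2N}.

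The conservation statement is where care is needed, because ``$q$ conserved'' is trivial once $T$ is constant: $q$ is a function of $\operatorname{Tr}T$ alone. I will therefore prove the substantive version through the discrete Wronskian. For any two solutions $\phi,\chi$ of the recurrence, define $W_n=\phi[n]\chi[n-1]-\phi[n-1]\chi[n]=\det(\Phi_n,X_n)$, where $\Phi_n$ and $X_n$ are the state vectors of $\phi$ and $\chi$. Then $W_{n+1}=\det(T\Phi_n,TX_n)=\det T\,W_n=W_n$. Taking $\chi=\bar\psi$, which is legitimate because $a$ is real, connects $W$ to the current $J$ of Eq.~(\ref{eq:current}). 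It also agrees with that remark, since $V_{\mathrm{im}}=0$ there gives $J[n]=J[n-1]$.

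Finally, the site-shift symmetry follows because $T$ is independent of $n$. If $\psi[n]$ solves the recurrence, so does $\psi[n+1]$, and the Bloch eigenmodes $e^{\pm iqn}$ diagonalize this shift. The main obstacle is interpretive rather than computational: the statement uses ``$\approx$''. I would handle this by proving everything under exact equalities and remarking that the approximate version follows by continuity of the eigenvalues of $T$ in $a$, away from the band edges $a=\pm2$.
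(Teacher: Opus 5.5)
Your proposal is correct and follows the same route as the paper's proof. You substitute the uniform hypotheses into Eq.~(\ref{eq:dnls}), divide by $\beta$ to get the recurrence and the transfer matrix $T$ of Eq.~(\ref{eq:transfer}), read off $|q|=|\tau|$ from Eq.~(\ref{eq:dispersion}), and obtain conservation from the $n$-independence of $T$; your explicit Wronskian step $W_{n+1}=\det T\,W_n=W_n$ and your remark that $|q|=|\tau|$ relies on the uniform-torsion assumption behind Eq.~(\ref{eq:dispersion}), not on the lemma's listed hypotheses alone, spell out what the paper's one-line proof leaves implicit.
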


\begin{proof}
Setting $\beta^{\pm}_n=\beta$ and $V_{\mathrm{eff}}[n]=V_{\mathrm{re}}$ in
Eq.~(\ref{eq:dnls}) turns the left-hand side into $\beta$ times the discrete second
difference $\psi[n+1]-2\psi[n]+\psi[n-1]$; equating to $V_{\mathrm{re}}\psi[n]$ and
dividing by $\beta>0$ gives the stated recurrence, whose transfer matrix, eigenvalues
$\lambda_\pm=e^{\pm iq}$, and dispersion $|q|=|\tau|$ are those established in
Eqs.~(\ref{eq:transfer})--(\ref{eq:dispersion}) above. Constancy of $a$ makes $T$ commute
with the site shift, conserving $q$.
\end{proof}

Here $V_{\mathrm{re}}$ is read from the known backbone geometry rather than solved for
self-consistently; the lemma is a kinematic statement about the potential profile of a
given segment, in keeping with the role of the Hasimoto map as a kinematic identity.

\begin{proposition}[Boundary rigidity]\label{prop:rigidity}
Under the hypotheses of Lemma~\ref{lem:rigidity}, the solution space on $\mathcal{S}$ is
exactly two-dimensional, and the two boundary data, namely the endpoint values, or
equivalently a single adjacent pair $(\psi[n_0],\psi[n_0-1])$ (an initial position,
tangent, and normal in the reconstruction of the lines preceding
Eq.~(\ref{eq:frame})), determine the entire interior $\{\psi[n]:n_0<n<n_1\}$ uniquely.
No interior degree of freedom remains.
\end{proposition}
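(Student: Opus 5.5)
By Lemma~\ref{lem:rigidity}, on $\mathcal{S}$ the DNLS reduces to the constant-coefficient recurrence $\psi[n+1]=a\,\psi[n]-\psi[n-1]$ with $a=2+V_{\mathrm{re}}/\beta$. Equivalently, in state form $\Psi_{n+1}=T\Psi_n$ with $T$ as in Eq.~(\ref{eq:transfer}). I would prove the proposition in three short steps: two-dimensionality of the solution space, unique determination by an adjacent pair, and unique determination by the two endpoint values.

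\textbf{Two-dimensionality and the initial-value problem.} I would show that the map from solutions to their initial pair, $\psi\mapsto\Psi_{n_0}=(\psi[n_0],\psi[n_0-1])^{\mathsf T}$, is a linear isomorphism onto $\mathbb{C}^2$. Injectivity follows from $\Psi_n=T^{\,n-n_0}\Psi_{n_0}$: the recurrence solves forward explicitly for $\psi[n+1]$ in terms of the two preceding values. Surjectivity holds because any pair $\Psi_{n_0}$ can be propagated this way. Since $\det T=1$, $T$ is invertible, so the same holds for any adjacent pair inside $\mathcal{S}$, not only the first. This gives the first claim, that the solution space is exactly two-dimensional. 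It also gives the adjacent-pair version of the rigidity claim: every interior value is the first component of $T^{\,n-n_0}\Psi_{n_0}$, so no interior freedom remains. In the reconstruction language, the adjacent pair corresponds to the initial position, tangent and normal data preceding Eq.~(\ref{eq:frame}).

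\textbf{Endpoint values.} Here I would use the eigenbasis of $T$. When $|a|<2$ (the elliptic, helical band of Lemma~\ref{lem:rigidity}), the eigenvalues are $e^{\pm iq}$, and the general solution is
\[
\psi[n]=A\cos\bigl(q(n-n_0)\bigr)+B\sin\bigl(q(n-n_0)\bigr).
\]
Imposing $\psi[n_0]$ fixes $A$, and imposing $\psi[n_1]$ fixes $B$ through $B\sin\bigl(q(n_1-n_0)\bigr)$. So the Dirichlet problem is uniquely solvable exactly when $\sin\bigl(q(n_1-n_0)\bigr)\neq0$. Equivalently, the off-diagonal entry $(T^{L})_{12}$, with $L=n_1-n_0$, must be nonzero; this entry is a Chebyshev polynomial $U_{L-1}(a/2)$. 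In the hyperbolic regime the analogous expression uses $\sinh$, which never vanishes for $L\ge1$. In the parabolic case $a=\pm2$ the solutions are linear times $(\pm1)^n$, and the determinant is $\propto L\neq0$.

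\textbf{Main obstacle.} The hard step is this endpoint formulation. In the elliptic band it genuinely fails at resonant lengths where $qL\in\pi\mathbb{Z}$. I would handle this by stating the endpoint version under the nondegeneracy condition $U_{L-1}(a/2)\neq0$, and by noting that these resonances form a measure-zero set of $V_{\mathrm{re}}$ values for fixed $L$. The adjacent-pair version, which the proposition offers as the equivalent formulation, needs no such condition, and I would present it as the unconditional content of the statement.
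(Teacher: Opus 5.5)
Your proposal is correct and follows the paper's proof. Because $\det T=1$, the map from a solution to its seed $\Psi_{n_0}$ is a linear isomorphism onto $\mathbb{C}^2$; this gives two-dimensionality and adjacent-pair rigidity unconditionally, while the endpoint version holds only away from standing-wave resonances, and your criterion $U_{L-1}(a/2)\neq 0$ is a sharper form of the paper's exclusion $q(n_1-n_0)\in\pi\mathbb{Z}$, since it correctly shows that the band edges $a=\pm2$ (where $qL\in\pi\mathbb{Z}$ trivially) are not resonant, the Dirichlet determinant there being $\pm L\neq 0$.
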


\begin{proof}
Since $\det T=1\neq 0$, $T$ is invertible and the map from a solution to its seed
$(\psi[n_0],\psi[n_0-1])\in\mathbb{C}^2$ is a linear isomorphism: the seed determines
$\psi[n]=(T^{\,n-n_0}\,\text{seed})_1$ for every $n$, forward and backward. Hence the
solution space is two-dimensional and one adjacent-pair seed fixes the interior
uniquely. (The two-endpoint form is equivalent away from the measure-zero standing-wave
resonances $q(n_1-n_0)\in\pi\mathbb{Z}$.) At the band edges $q\in\{0,\pi\}$ the roots
coincide and the second solution is the secular mode $(A+Bn)(\pm1)^n$; the space is
still two-dimensional, so the conclusion is unchanged.
\end{proof}

\begin{corollary}[Description, not prediction]\label{cor:nogo}
Within the purely local, real-potential reduction, the interior conformation of an
integrable segment is fixed by $(\beta,V_{\mathrm{re}})$ together with the two boundary
data (Proposition~\ref{prop:rigidity}). The amino-acid sequence enters this interior only
through the bond stiffnesses $\beta^{\pm}_n=1/|\mathbf{t}_n|$ within the single scalar
$V_{\mathrm{re}}$; the dispersion relation $\cos\tau=1+V_{\mathrm{re}}/2\beta$ carries no
other sequence-dependent quantity, and this channel accounts for less than $5\%$ of the
variance of $V_{\mathrm{re}}$ on our dataset (Sec.~\ref{sec:decomposition}), the remainder
being set by the local geometry $(\kappa,\tau)$. The conserved quasi-momentum
$|q|=|\tau|$ is thus a quantity read off a known segment rather than generated from
sequence: an integrable reduction describes a backbone whose endpoints are given, but
does not predict an interior conformation from sequence. The map that \textit{de novo} prediction
would require, from sequence to a specific three-dimensional interior of dimension
$2(L-1)$ for a segment of length $L$, is precisely what the boundary rigidity removes.
\end{corollary}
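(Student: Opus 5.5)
The corollary has three logical pieces, and I would prove them in that order.

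\emph{Step 1: rigidity.} I would invoke Proposition~\ref{prop:rigidity} directly. On an integrable segment $\mathcal{S}$, Lemma~\ref{lem:rigidity} gives the recurrence $\psi[n+1]=a\,\psi[n]-\psi[n-1]$ with $a=2+V_{\mathrm{re}}/\beta$. Hence the only inputs entering the transfer matrix $T$ are $\beta$ and $V_{\mathrm{re}}$. The proposition then says the interior $\{\psi[n]\}$ is the linear image $\psi[n]=(T^{\,n-n_0}\Psi_{n_0})_1$ of a two-dimensional seed. Applying the inverse map Eq.~(\ref{eq:inverse}) pointwise, the interior $(\kappa[n],\tau[n])$ is a function of $(\beta,V_{\mathrm{re}},\text{boundary data})$ alone. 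This establishes the first sentence.

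\emph{Step 2: the sequence channel.} I would audit which symbols in the reduced equation can depend on amino-acid identity. In the uniform reduction, $\beta$ is the chain average of $\beta^{\pm}_n=1/|\mathbf{t}_n|$ by Eq.~(\ref{eq:beta}), and $V_{\mathrm{re}}$ is given by Eq.~(\ref{eq:Vre}). By the decomposition proposition, the latter contains only $\beta^{\pm}_n$, $r^{\pm}$ and $\cos\tau$. Since $r^{\pm}$ and $\tau$ are geometric, the only explicit sequence-dependent quantity is $\beta^{\pm}_n$. The same holds for the dispersion relation Eq.~(\ref{eq:dispersion}), which involves only $\beta$ and $V_{\mathrm{re}}$. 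The $<5\%$ bound is not something to prove here: I would cite the Geometric dominance corollary ($\bar\rho_{\text{geom}}=0.951$, so $1-\bar\rho^2\approx0.05$) as an empirical input. I would flag this explicitly, since the corollary mixes a theorem with a dataset statement.

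\emph{Step 3: the dimension count.} A segment of length $L$ has $L-1$ free pairs $(\kappa,\tau)$ in a generic curve, so $2(L-1)$ real internal coordinates, which is the target dimension of a hypothetical sequence-to-structure map. Under the reduction, Step~1 shows this set is parametrized by the finite data $(\beta,V_{\mathrm{re}})$ plus a $\mathbb{C}^2$ seed. That dimension is independent of $L$, so the image has dimension at most a constant and misses almost all of the $2(L-1)$-dimensional space once $L$ is moderately large. Moreover the seed is boundary data, not sequence.

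The main obstacle is interpreting "does not predict" rigorously. I would formalize it as the statement that the only sequence-to-interior map factors through the low-dimensional data $(\beta,V_{\mathrm{re}})$ and the boundary seed. That factorization is the whole content of Steps~1–3, and the quantitative $5\%$ clause remains an empirical supplement rather than a derived bound. I would also note that $|q|=|\tau|$ being read off a known segment follows from Lemma~\ref{lem:rigidity}, because $V_{\mathrm{re}}$ there is computed from the given geometry.
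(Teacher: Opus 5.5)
Your proposal follows the paper's own reasoning: the corollary is assembled from Proposition~\ref{prop:rigidity} (the interior is fixed by the seed and $a=2+V_{\mathrm{re}}/\beta$), from reading Eq.~(\ref{eq:Vre}) to see that $\beta^{\pm}_n$ is the only explicit sequence channel, from the cited empirical $5\%$ figure, and from the remark after Lemma~\ref{lem:rigidity} that $V_{\mathrm{re}}$ is read off the known geometry; your explicit count (a bounded number of real parameters versus $2(L-1)$) just makes ``removes'' precise. One caution about the parenthetical you carried over from the Geometric dominance corollary: with $\bar\rho_{\text{geom}}=0.951$, $1-\bar\rho^{2}\approx 0.096$, not $0.05$ (it is $1-\bar\rho$ that is about $0.05$), so cite the $<5\%$ figure as the paper's stated empirical input, not as the result of that computation.
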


\paragraph{Interpretation.}
A uniform segment is translation-invariant, carries a single conserved quasi-momentum,
and is fixed by its boundaries; this rigidity is what makes the segment integrable and,
at the same time, what prevents its interior from encoding a sequence-dependent fold. The
conclusion is a property of the local, real-potential reduction, not of the Hasimoto map,
and it constrains only the interior of a segment. The relative placement of
secondary-structure elements is a separate matter, governed by information the local
dispersion relation does not carry (Sec.~\ref{sec:piecewise}). Nor does it say that
sequence lacks the physico-chemical information needed to fold a protein: that information
is real, but the present model does not couple it into the quantities that fix the
interior. A predictive model must do exactly that, incorporating amino-acid information into its parameters. This
is consistent with the predictive successes of DNLS-based models, which supply the missing
conformational information externally, through dissipative relaxation dynamics (Monte
Carlo or Glauber, rather than integrable time evolution) together with injection of the
native structure via fitted parameters, native-seeded external
fields~\cite{liubimov2025modeling}, or an RMSD-to-native term in the cost
functional~\cite{chernodub2010topological}. In a genuine \textit{de novo} setting that injected
information is absent, and the boundary data the rigidity requires is itself the missing
answer.

\section{Self-consistent field test}
\label{sec:scf}

The two barriers identified in Sec.~\ref{sec:decomposition} are static: they concern
the information content of $V_{\text{eff}}$ extracted from known structures and are
properties of the local, real-potential reduction itself. They follow from discarding
$V_{\text{im}}$ and from the geometric dominance of $V_{\text{re}}$
[Eqs.~(\ref{eq:Vre})--(\ref{eq:Vim})], and hence persist for any choice of
energy-function parameterization within this model class. This point is
demonstrated directly by what we term an \emph{oracle test}: in
Fig.~\ref{fig:vim_info}(b) we supply the exact $V_{\text{re}}[n]$ computed from the
native structure, which constitutes the best possible real potential that any energy
function could produce, and attempt backbone reconstruction with $V_{\text{im}}=0$.
The resulting RMSD of 20--120\,\AA{} across all chain lengths (mean ${\sim}50$\,\AA; 40--120\,\AA{} for the longest 200--300-residue chains, cf.\ Sec.~\ref{sec:decomposition}) shows that even perfect
knowledge of $V_{\text{re}}$ is insufficient, because the $2^{N}$ torsion-sign
degeneracy is intrinsic to the real-potential reduction and cannot be resolved by any choice of
energy function.

A separate question is whether the DNLS can be used as a dynamical equation to drive
an initially unfolded chain toward the native state, given a physically motivated
effective potential. A natural dynamical hypothesis is that the DNLS could drive folding through modulation
instability, in which small perturbations of a uniform solution grow
exponentially and might seed the formation of soliton-like secondary-structure elements.
We test this dynamical scenario directly by performing self-consistent
field (SCF) iterations on all 856 proteins in our dataset, using two independent
settings: one with hydrophobic and elastic potentials only, and one with an additional
hydrogen-bond term. The SCF test provides a third, dynamical barrier that is
independent of Barriers~I and~II and serves as an additional confirmation of the
representational limitation.

\subsection{SCF formulation}

We initialize the complex field as a nearly uniform state
$\psi_{0}[n] = \kappa_{0} + \delta\psi[n]$, where $\kappa_{0} = 0.1$\,rad represents a
nearly straight chain and $\delta\psi[n]$ is a small random perturbation with
$|\delta\psi| < 0.01$. The field is evolved under damped dynamics
\begin{equation}
	\frac{d\psi[n]}{dt} = -(1 + i\gamma)\,\frac{\delta H[\psi]}{\delta\psi^{*}[n]}\,,
	\label{eq:scf_dynamics}
\end{equation}
where $\gamma = 0.5$ is a dissipation parameter that drives the system toward a local
energy minimum rather than conserving the Hamiltonian. The functional $H[\psi]$ consists
of a kinetic (elastic) term and an interaction term:
\begin{equation}
	H[\psi] = \sum_{n}\beta\,|\psi[n+1]-\psi[n]|^{2}
	+ \sum_{n}V_{\text{int}}[n]\,|\psi[n]|^{2}\,.
	\label{eq:scf_hamiltonian}
\end{equation}

The interaction potential $V_{\text{int}}$ is constructed from two contributions. The
elastic term $V_{\text{elastic}}[n] = \lambda(\kappa[n] - \kappa_{\text{target}})^{2}$
penalizes deviations of the local curvature from a target value
$\kappa_{\text{target}} = \langle\kappa\rangle_{\text{native}}$ computed from the known
structure. The hydrophobic term $V_{\text{hydro}}[n]$ is a non-local potential that
assigns a favorable energy to contacts between hydrophobic residues:
\begin{equation}
	V_{\text{hydro}}[n] = -\sum_{m,\,|m-n|>4} h[n]\,h[m]\,
	f\!\bigl(|\mathbf{r}[n]-\mathbf{r}[m]|\bigr)\,,
	\label{eq:vhydro}
\end{equation}
where $h[n]\in\{0,1\}$ is the hydrophobicity index of residue $n$ (assigned according to
the Kyte-Doolittle~\cite{kyte1982simple} scale with a binary cutoff) and $f(r)$ is a contact function that
equals unity for $r < 8$\,\AA\ and decays smoothly to zero beyond this distance. The
backbone coordinates $\mathbf{r}[n]$ are reconstructed from $\psi[n]$ at each SCF step
via the inverse Hasimoto transform [Eq.~(\ref{eq:inverse})] and discrete Frenet
reconstruction.

In the second setting we add a hydrogen-bond potential
\begin{equation}
	V_{\text{hb}}[n] = -\epsilon_{\text{hb}}\sum_{m=n+3}^{n+5}
	g\!\bigl(\kappa[n],\kappa[m]\bigr)\,
	f\!\bigl(|\mathbf{r}[n]-\mathbf{r}[m]|\bigr)\,,
	\label{eq:vhb}
\end{equation}
where $\epsilon_{\text{hb}}$ is a coupling strength and $g(\kappa[n],\kappa[m])$ is a
geometric filter that favors the curvature values characteristic of helical hydrogen bonds.
This term is designed to mimic the $i \to i+4$ backbone hydrogen bond that stabilizes
$\alpha$-helices, expressed entirely in terms of the $\psi$ field variables.

The SCF iteration proceeds as follows. Starting from $\psi_{0}$, we (i) reconstruct the
backbone coordinates, (ii) evaluate $V_{\text{int}}$ (and $V_{\text{hb}}$ in the second
setting), (iii) update $\psi$ according to Eq.~(\ref{eq:scf_dynamics}) with a discrete
time step $\Delta t = 0.01$, and (iv) repeat for 5000 steps. At each step we record the
mean curvature $\langle\kappa\rangle = N^{-1}\sum_{n}|\psi[n]|$ and the C$_\alpha$ RMSD
between the reconstructed backbone and the native structure (after optimal
superposition). The best RMSD achieved over the trajectory is reported as the outcome
for each protein.

\subsection{Results without hydrogen bonds}

\begin{figure*}[t]
	\centering
	\includegraphics[width=0.8\textwidth]{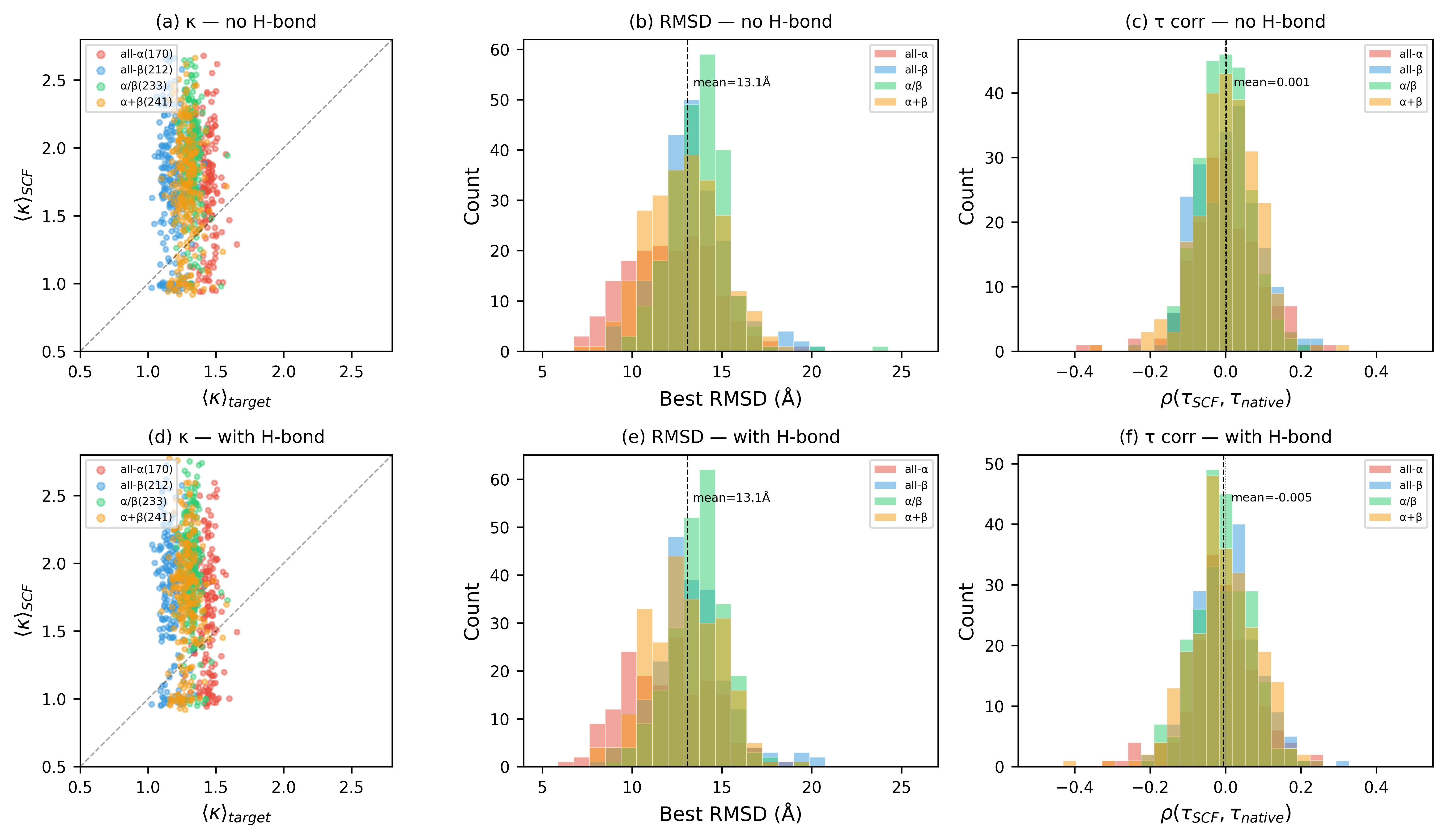}
	\caption{Self-consistent field (SCF) test of DNLS-driven folding on 856 non-redundant
		proteins, without (top row) and with (bottom row) a hydrogen-bond potential. Points
		and histograms are colored by SCOP class: all-$\alpha$ (170, red), all-$\beta$
		(212, blue), $\alpha/\beta$ (233, green), $\alpha$+$\beta$ (241, orange).
		(a,\,d) Mean bond curvature $\langle\kappa\rangle_{\text{SCF}}$ vs.\ native target
		$\langle\kappa\rangle_{\text{target}}$; the dashed line marks perfect agreement. In
		both settings $\langle\kappa\rangle_{\text{SCF}}$ is systematically overestimated
		(mean 1.776 and 1.790 vs.\ target 1.298), with no SCOP class approaching the
		diagonal. (b,\,e) Distribution of best-achieved RMSD. The two histograms are
		statistically indistinguishable (mean 13.1\,\AA; range 6.8--24.0\,\AA\ without
		hydrogen bonds, 6.7--26.4\,\AA\ with hydrogen bonds), and no protein in either
		setting reaches RMSD $< 5$\,\AA. (c,\,f) Pearson correlation
		$\rho(\tau_{\text{SCF}}, \tau_{\text{native}})$ between the SCF-predicted and native
		torsion-angle profiles. Both distributions are centered on zero (mean 0.001 vs.\
		$-0.005$), indicating that the SCF torsion angles are uncorrelated with the native
		structure. The near-identical results across the two rows demonstrate that the
		folding failure is not due to missing hydrogen-bond physics but to the
		real-potential SCF dynamics on the single complex scalar field $\psi$, whose dissipative evolution couples amplitude and phase and so cannot
		independently regulate the curvature and torsion constraints required for structure
		determination.}
	\label{fig:scf}
\end{figure*}

Figure~\ref{fig:scf}(a) shows the final mean curvature
$\langle\kappa\rangle_{\text{SCF}}$ plotted against the native target
$\langle\kappa\rangle_{\text{target}}$ for all 856 proteins, colored by SCOP class. The
dashed diagonal marks perfect agreement. The SCF systematically overestimates the mean
curvature: the dataset-wide mean of $\langle\kappa\rangle_{\text{SCF}}$ is 1.776\,rad
compared to a target of 1.298\,rad, an overestimation of approximately 37\%. No SCOP
class approaches the diagonal. The overestimation reflects the tendency of the damped
DNLS dynamics to produce excessive bending: the hydrophobic potential drives chain
compaction, which in the $\psi$ representation manifests as increased $|\psi| = \kappa$,
but without the directional constraints of hydrogen bonds there is no mechanism to
regulate the curvature to its native value.

Figure~\ref{fig:scf}(b) displays the distribution of best-achieved RMSD across the 856
proteins. The mean RMSD is 13.1\,\AA\ with a range of 6.8 to 24.0\,\AA. No protein
achieves RMSD below 5\,\AA, a threshold commonly used to indicate a successful fold
prediction. The distribution shows no significant separation among the four SCOP classes:
all-$\alpha$, all-$\beta$, $\alpha/\beta$, and $\alpha$+$\beta$ proteins are
interleaved throughout the histogram. This uniformity confirms that the SCF failure is
not specific to any particular fold topology.

Figure~\ref{fig:scf}(c) shows the Pearson correlation $\rho(\tau_{\text{SCF}},
\tau_{\text{native}})$ between the torsion-angle profile produced by the SCF and the
native torsion-angle profile, computed for each protein. The distribution is centered on
zero (mean $\rho = 0.001$, median $\rho = 0.000$) and extends symmetrically over the
range $[-0.4, 0.4]$. The SCF-generated torsion angles therefore bear no
systematic relationship to the native structure.

The absence of correlation between predicted and native torsion angles constitutes a clear failure of the \textit{de novo} predictive scheme. Although the SCF dynamics may induce chain compaction, reflected in the curvature profile, the torsion angles $\tau[n]$, which govern the global fold topology, remain effectively stochastic. The result is a compact globule with no native structural fidelity: the isotropic hydrophobic
potential drives compaction (increased $\kappa$) but cannot select the specific
torsion-angle sequence that defines the native fold. This is consistent with the established distinction between
hydrophobic collapse and folding. The selection of specific secondary and tertiary structure elements involves directional
hydrogen bonds, whose geometric constraints are difficult to represent within the scalar DNLS framework.

\subsection{Results with hydrogen bonds}

The bottom row of Fig.~\ref{fig:scf} shows the same three diagnostics for the SCF with
the hydrogen-bond potential [Eq.~(\ref{eq:vhb})] included. The results are statistically
indistinguishable from the setting without hydrogen bonds. The mean curvature increases
slightly to $\langle\kappa\rangle_{\text{SCF}} = 1.790$\,rad (compared to 1.776 without
the hydrogen-bond term), moving further from the target rather than closer. The RMSD
distribution is virtually identical: mean 13.1\,\AA, range 6.7 to 26.4\,\AA, with no
protein reaching RMSD $< 5$\,\AA. The torsion correlation shifts negligibly from
$\rho = 0.001$ to $\rho = -0.005$, remaining consistent with zero.

The universal failure of the hydrogen-bond term to improve the SCF outcome demonstrates that the barrier to folding within the DNLS framework is not merely energetic, but representational: the single complex scalar field $\psi[n] = \kappa[n]\exp(i\sum\tau[k])$ entangles curvature and torsion in its amplitude and phase. A physical backbone hydrogen bond imposes strictly independent constraints on spatial distance (governed by $\kappa$) and directional orientation (governed by $\tau$) between specific residue pairs. The evolution of $\psi$ couples its amplitude $|\psi|$ and phase $\arg(\psi)$ through the nonlinear dynamics, which prevents the independent regulation of curvature and torsion required to stabilize specific donor-acceptor geometries. The hydrogen-bond potential [Eq.~(\ref{eq:vhb})], expressed in terms of $\psi$, therefore cannot enforce the discrete orientational constraints that the physical interaction requires.

A natural question is whether replacing the interaction potential
$V_{\text{int}}$ with a more refined functional, for instance a
C$_\alpha$-level effective potential obtained by systematic coarse-graining
of an all-atom force field such as AMBER~\cite{cornell1995second, case2005amber} or CHARMM~\cite{brooks1983charmm, brooks2009charmm}, would alter this
conclusion. Three independent lines of evidence indicate that it would not.
First, the oracle test [Fig.~\ref{fig:vim_info}(b)] already supplies the
exact native $V_{\text{re}}[n]$, which represents the theoretical optimum
for the real part of the potential; the resulting RMSD of 20--120\,\AA{}
demonstrates that the $2^{N}$ torsion-sign degeneracy (Barrier~I) persists
irrespective of energy-function quality. Second, the geometric dominance
result (Barrier~II, $\rho_{\text{geom}}=0.951$) establishes that
$V_{\text{re}}$ is 95\% determined by backbone geometry rather than by
amino-acid identity, so that even an exact force-field projection would
produce a $V_{\text{re}}$ profile nearly indistinguishable from the
geometry-only version. Third, the representational bottleneck is structural
rather than parametric: as described above, the Hasimoto field $\psi$ encodes the two real degrees
of freedom $\kappa$ and $\tau$ in a single complex number whose amplitude
and phase remain coupled under the dynamics, so that no force field, however
accurately parameterized, can independently enforce the distance and angular constraints
that directional interactions impose on specific residue pairs.

The SCF test with its deliberately minimal energy function
therefore probes the representational ceiling of the $\psi$ framework, not
the quality of a particular force field.

This conclusion extends to state-of-the-art machine-learning force fields (MLFFs) such as AI2BMD~\cite{wang2024ab}, MACE~\cite{batatia2022mace}, and ViSNet~\cite{wang2024enhancing}. These architectures achieve near-quantum accuracy by operating on explicit Cartesian coordinates, but the three arguments above apply to them unchanged: the bottleneck is representational, not energetic, so higher chemical fidelity cannot overcome the limitation of the Hasimoto framework. Because $V_{\text{re}}$ is determined ${\sim}95\%$ by geometry rather than sequence (Barrier~II), the high-fidelity chemical information provided by MLFFs is largely filtered out when projected onto the DNLS potential. The real-potential reduction, which retains only $V_{\text{re}}$ and discards the parity-odd $V_{\text{im}}$, thus acts as a lossy projection that drops the chiral degrees of freedom needed to distinguish the native state, a structural deficit that no improvement in force-field accuracy can overcome. This same limitation is recognized empirically in the MLFF literature: recent work reports that most existing potentials assume localized interactions and that explicitly modeling \emph{non-local} electrostatic and multipole contributions reduces energy and force errors by more than 50\% in biomolecular simulations~\cite{cui2026enhancing}. Our parity--locality classification (Table~\ref{tab:parity_locality}) offers a first-principles account of why such non-local terms are indispensable: the interactions that select tertiary structure populate the non-local row that any nearest-neighbor scalar potential, including the DNLS, cannot represent.

\subsection{Summary of the dynamical barrier}

The SCF experiments establish a third, dynamical barrier to using the local,
real-potential DNLS reduction for protein structure prediction. The barrier can be stated concisely: the dissipative DNLS
dynamics, even supplemented with physically motivated interaction terms, cannot drive an
unfolded chain to the native state, because the single complex scalar field $\psi$ lacks
the representational capacity to encode the independent curvature and torsion constraints
imposed by directional hydrogen bonds.
Three quantitative findings support this conclusion across all 856 proteins and both SCF
settings. First, the mean curvature is systematically overestimated by approximately 37\%,
indicating that the dynamics produces excessive bending without the regulatory effect of
hydrogen-bond geometry. Second, no protein achieves RMSD below 5\,\AA, and the RMSD
distribution shows no dependence on SCOP class, confirming that the failure is universal
rather than fold-specific. Third, the torsion correlation with the native structure is
indistinguishable from zero ($\rho \approx 0$), demonstrating that the SCF dynamics
generates effectively random torsion-angle sequences.
Comparing the two SCF settings isolates the cause of the failure. The addition of
a hydrogen-bond potential within the DNLS framework produces no measurable improvement in
any of the three diagnostics. This rules out the interpretation that the SCF failure is
due to an incomplete energy function. The failure is instead structural, originating in the $\kappa$--$\tau$ entanglement of $\psi$ described above. This \textit{de novo} failure does not contradict the established
predictive successes of DNLS-based models: when the free-energy parameters are fit to a
known native structure, Monte Carlo (or Glauber) relaxation can indeed recover that
structure from a random-walk phase~\cite{begun2021topology,liubimov2025modeling}. Our
claim is restricted to the \textit{de novo} setting, in which no native structure is available to
be injected, whether into the parameters, an external field, or an RMSD-to-native cost, and the
dissipative dynamics is driven only by sequence-level potentials. The relationship between
this dynamical barrier and the two static barriers
of Sec.~\ref{sec:decomposition} is discussed in Sec.~\ref{sec:discussion}.

\section{Discussion and Conclusion}
\label{sec:discussion}

The Hasimoto transform was originally constructed for vortex filaments
evolving under the local induction approximation (LIA). In that setting
the transform is both kinematic and dynamic: it converts the geometric
evolution of the filament into the integrable cubic NLS, whose soliton
solutions describe physical excitations that propagate without
dispersion. The success of this construction rests on four properties
of the vortex system that proteins do not share: locality of
interactions, homogeneity of the medium, Hamiltonian (non-dissipative)
dynamics, and a physically determined effective potential. Protein
folding violates all four. Hydrophobic contacts and electrostatic
forces are non-local; the backbone comprises 20 chemically distinct
amino acids; folding is a dissipative free-energy minimization in
aqueous solvent; and, as shown in Sec.~\ref{sec:decomposition}, the
effective potential $V_{\text{eff}}$ on proteins is an algebraic
consequence of the backbone geometry rather than an independently
specified physical quantity. Table~\ref{tab:comparison} summarizes these distinctions.
Each of the four mismatches removes a degree of freedom that the predictive problem requires: non-locality, chemical heterogeneity, dissipation, and a sequence-specified potential are precisely the ingredients that a local, homogeneous, Hamiltonian, geometry-fixed reduction leaves out. The reduction is in this sense kinematically complete but dynamically inert: it faithfully re-expresses a given geometry, yet retains none of the sequence-specific, non-local information from which that geometry would have to be generated.

\begin{table}[t]
	\centering
	\caption{Comparison of the Hasimoto map applied to vortex filaments
		and to protein backbones. The four properties listed are necessary
		for the map to function as a predictive dynamical framework.}
	\label{tab:comparison}
	\small
	\setlength{\tabcolsep}{4pt}
	\begin{tabular}{lcc}
			\toprule
			Property & Vortex filament & Protein backbone \\
			\midrule
			Interactions & Local (LIA) & Non-local \\
			Medium & Homogeneous & Heterogeneous (20 AA) \\
			Dynamics & Hamiltonian & Dissipative \\
			$V_{\text{eff}}$ & Physical potential & Kinematic identity \\
			\bottomrule
		\end{tabular}
\end{table}

The geometric formalism established by Niemi and collaborators provides a rigorous basis for the structural characterization of protein backbones. The identification of discrete soliton solutions with secondary-structure motifs reveals the geometric regularity underlying secondary structure, while multi-soliton ans\"atze yield compact parameterizations of folded states with sub-\aa ngstr\"om accuracy~\cite{molkenthin2011discrete,krokhotin2012soliton}. Recent extensions have successfully applied this basis to simulate thermal unfolding~\cite{begun2021topology} and characterize topological phase transitions~\cite{begun2025local}. These studies constitute a comprehensive treatment of the \emph{inverse problem}: given a known topology, the DNLS soliton basis affords an efficient representation and a robust framework for analyzing fluctuations around the native state.
The topological aspects of this program connect to a broader body of work on protein topology. Approximately 6\% of known protein structures form knots, slipknots, or links whose folding requires the backbone to cross topological barriers~\cite{sulkowska2020folding, dabrowski2017topological}. These entangled proteins highlight a fundamental limitation of any local geometric description: the topological invariants that distinguish a knotted from an unknotted fold are inherently global properties that cannot be fully determined from the local fields $(\kappa[n],\tau[n])$ at any finite number of sites. Consequently, soliton-based studies typically determine parameters by fitting to a known crystallographic structure, whether through Metropolis minimization of RMSD~\cite{molkenthin2011discrete} or direct fitting to native coordinates~\cite{begun2021topology}.

That a known fold can be represented so faithfully has a precise structural explanation in the integrability
hierarchy of discrete nonlinear equations. The geometric free energies employed in this
program are, after the Hasimoto map, \emph{on-site} discrete nonlinear Schr\"odinger
functionals: their nonlinear terms (e.g.\ the quartic double well
$\lambda(\kappa_i^2-m^2)^2$ and the coupling $\kappa_i^2\tau_i^2$) are diagonal in the
site index, the only inter-site term being the linear hopping
$\kappa_{i+1}\kappa_i$~\cite{begun2025local}. Unlike the Ablowitz--Ladik lattice~\cite{ablowitz1976nonlinear}, the
integrable discretization of NLS, which carries a Lax pair and an infinite hierarchy of
conserved quantities, the on-site discrete NLS is non-integrable: it possesses no Lax
pair and no conserved hierarchy beyond the global phase, and its uniform-segment limit
retains only the single quasi-momentum of the constant-coefficient transfer
matrix~(\ref{eq:transfer}) (the discrete probability current of
Eq.~(\ref{eq:current}) is the conserved bilinear of the \emph{linear} eigenproblem,
distinct from the integrable \emph{hierarchy} absent here). The $\alpha$-helices are therefore isolated segments of
\emph{piecewise linear integrability}, not manifestations of a globally integrable
dynamics. This is the structural origin of the descriptive--predictive asymmetry: a
fully integrable system is rigidly constrained by its conserved hierarchy, so initial
local data propagate to a unique global trajectory; the on-site model has no such
rigidity, and its soliton solutions are obtained by direct energy minimization against a
known target rather than by integrating the sequence forward. This structural feature also motivates the approach we take throughout. Rather than following the usual integrable-systems route of solving the equation for explicit soliton solutions and examining their stability, we take the integrable state itself as a fixed kinematic reference and measure how real backbones depart from it, using the exact decomposition, the integrability error $E[n]$, and the parity classification. Because the protein potential is reconstructed from a known geometry, solving for solitons would merely return the structure already supplied; the productive direction is the reverse one, reading the integrable ideal as a ruler for the deviations that carry the folding information. The absence of a Lax pair
is thus not a technical detail but the reason the framework can represent a known fold
yet cannot generate an unknown one from local sequence data alone.

In contrast, the present work interrogates the complementary \textit{forward problem}: determining whether the DNLS framework permits \textit{ab initio} prediction of the native structure solely from the amino-acid sequence. Our analysis indicates that structural barriers impede this predictive pathway. The exact decomposition (Sec.~\ref{sec:decomposition}) reveals that the effective potential $V_{\text{eff}}$ is determined predominantly by the target geometry rather than the sequence, creating an informational circularity that cannot be resolved by sequence-based potentials. The SCF experiments (Sec.~\ref{sec:scf}) further show that physically motivated interactions fail to lift the $2^{N}$ torsion-sign degeneracy or overcome the representational limitations of the scalar field. These findings indicate that the distinction between the inverse and forward problems is structural: the capacity to describe known folds does not imply the feasibility of predicting unknown ones. The specificity that prediction demands is precisely the non-local, sequence-dependent content discarded in passing to a local real-potential model, and rearranging that model does not restore it.
One might nonetheless hope to supply that missing specificity dynamically. The hypothesis that modulation instability of the DNLS may provide a
dynamical mechanism for the emergence of secondary structure from an
initially straight chain is directly tested by our SCF
experiments. The instability does produce curvature
growth from a nearly uniform initial state, consistent with the
expected behavior. However, the resulting structures bear no
resemblance to native folds: the torsion angles are uncorrelated with
the native profile ($\rho \approx 0$), and the RMSD remains above
5\,\AA\ for all 856 proteins in both SCF settings. Modulation
instability generates generic bending but cannot select the specific
$(\kappa,\tau)$ sequence that defines a particular protein fold.

Although this on-site, real-potential reduction cannot serve as a predictive tool, the exact
decomposition and the analyses built upon it yield three constructive
results. First, the integrability error $E[n]$
[Eq.~(\ref{eq:E_def})] serves as a geometric probe of secondary
structure. The ROC analysis (Fig.~\ref{fig:roc}, AUC $= 0.72$)
demonstrates that 72\% of the helix/non-helix distinction defined by
DSSP is captured by a scalar test of discrete helical symmetry applied
to C$_\alpha$ coordinates alone, without reference to hydrogen-bond
energies or side-chain identities. Whether this geometric criterion
can provide useful secondary-structure annotation in data-limited
settings, such as low-resolution density maps or coarse-grained
trajectories, remains to be tested on independent benchmarks; the
present AUC establishes the conceptual correspondence between DNLS
integrability and helical geometry but does not by itself validate a
practical tool. The gap between AUC $= 0.72$ and unity quantifies the
structural information that hydrogen bonds and non-local interactions
contribute beyond local geometric regularity. Second, the effective potential
$V_{\text{eff}}[n]$ provides a structural fingerprint that is invariant
under rigid-body transformations. Because $V_{\text{re}}$ is 95\%
determined by geometry (Sec.~\ref{sec:barrier2}), homologous proteins with low
sequence identity but similar folds produce similar $V_{\text{re}}[n]$
profiles, offering a one-dimensional scalar representation of backbone
shape for structure comparison without spatial superposition. Third, the
information-theoretic lower bound on chiral information loss is exact
within the purely local, real-potential model class: each residue contributes one bit of torsion-sign
information encoded in $V_{\text{im}}$ but absent from $V_{\text{re}}$,
producing a $2^{N}$-fold degeneracy that any predictive scheme based on
this class of models would need to resolve.

Comparison with the data-driven methods that have solved
the prediction problem in practice throws these structural barriers into sharper
relief. AlphaFold~2~\cite{jumper2021highly} and ESMFold~\cite{lin2023evolutionary} both predict the full $SE(3)$ rigid-body frame, comprising a rotation matrix and a translation vector, for every residue. This approach retains the complete orientational degrees of freedom of the discrete Frenet frame without projecting them onto a scalar field. The theoretical foundations of this representational choice are clarified by the geometric deep learning framework~\cite{bronstein2021geometric, bronstein2017geometric}, which demonstrates that neural architectures respecting the symmetry group of the data domain (specifically, $SE(3)$ equivariance for molecular geometry) achieve systematic gains in sample efficiency and
generalization by building physical invariances directly into the
network structure rather than learning them from data. The full-frame
representation retains both signs of the torsion explicitly, whereas the
local, real-potential reduction of the DNLS, which keeps only
$V_{\text{re}}=\operatorname{Re}V_{\text{eff}}$, projects onto the
parity-even sector, and our Barrier~I shows that this projection
discards the torsion sign, introducing a $2^{N}$ chiral
degeneracy that the full-frame representation avoids entirely.
Barrier~II reveals a second contrast: the DNLS effective potential
is 95\% determined by local backbone geometry, whereas the attention mechanism of AlphaFold and the language-model embeddings of ESMFold encode
long-range coevolutionary and contextual information that couples
distant residues, information that statistical-physics methods such
as direct coupling analysis~\cite{morcos2011direct, cocco2013principal}
have shown to be extractable from evolutionary data via the
maximum-entropy principle. The DNLS framework, operating on nearest-neighbor
differences of $\psi$, has no analogous channel for non-local
sequence information. These comparisons do not diminish the value of the geometric approach, since data-driven models supply predictions rather than physical explanations. They do, however, delineate the specific representational deficits that any future analytical theory must overcome: it must preserve the full $SE(3)$ frame at each residue
and incorporate non-local, sequence-dependent interactions that go
beyond the scalar Hasimoto field.

The three barriers identified in this work point to specific physical
ingredients that a geometric theory of protein folding would need to
incorporate. Hydrogen bonds impose simultaneous constraints on the
distance and relative orientation of donor and acceptor groups, coupling
$\kappa$ and $\tau$ at specific residue pairs in a manner that cannot be
captured by a potential acting on the single complex field $\psi$. A
geometric energy functional treating $\kappa[n]$ and $\tau[n]$ as
independent fields, rather than combining them into a single complex
scalar, would be a natural starting point. The hydrophobic effect, an
entropic force mediated by solvent reorganization, is inherently
non-local and temperature-dependent, with no natural representation in
the DNLS Hamiltonian. The SCF experiments confirm that a contact-based
hydrophobic potential produces chain compaction but cannot select
secondary structure, consistent with the known distinction between
hydrophobic collapse and folding. Finally, the geometric dominance of
$V_{\text{re}}$ means that sequence-dependent terms going beyond the
weak modulation of virtual-bond lengths in $\beta^{\pm}_{n}$ would need
to enter any predictive geometric framework explicitly.

In summary, we have derived an exact closed-form decomposition of the
DNLS effective potential on protein backbones and used it to identify
three structural barriers to forward structure prediction: the
torsion-sign degeneracy encoded in $V_{\text{im}}$, the geometric
dominance of $V_{\text{re}}$, and the universal failure of
self-consistent field dynamics across 856 non-redundant proteins. We
further showed, in dynamical-systems terms, that the rigidity of an
integrable uniform segment is what removes the freedom
prediction requires, since boundary data fix the interior. A
parity-by-locality classification of the folding forces then explains why the
tertiary-structure drivers lie outside the local, real-potential sector.
These
barriers are mathematical in nature and are unlikely to be resolved by
parameter adjustment or algorithmic improvement within the local, real-potential DNLS
reduction. The Hasimoto map applied to proteins functions primarily as a kinematic identity
rather than a dynamical equation: it provides a useful geometric
language for describing folded states, a purely geometric helix
detector, and a rotation-invariant structural fingerprint, but within the scope of our analysis the local, real-potential reduction built on it does
not provide a viable pathway to \textit{ab initio} protein structure
prediction. As the comparison with data-driven methods above makes concrete, a future analytical theory of folding must go beyond the scalar Hasimoto field to preserve the geometric and chemical degrees of freedom that the folding process requires. The piecewise picture that emerges here, in which the integrability error partitions the backbone into near-integrable segments where $E[n]\to0$ and the broken-integrability regions between them, suggests a natural question for future work: whether the boundary between the ordered, soliton-like helical phase and the disordered coil phase admits a sharp theoretical characterization, in the sense of a limiting form for the order parameter across the transition. The descriptive--predictive tension we locate here for the protein backbone may, more generally, be characteristic of integrable reductions of heterogeneous physical systems.

\section*{Acknowledgments}
We thank Prof. Yan-Hong Qin (Xinjiang University) for inspiration and foundational guidance that motivated this work. This work was carried out in part using computing resources at the Computing and Data Center of Xinjiang University.

\section*{Declaration of competing interest}
The authors declare that they have no known competing financial interests or personal relationships that could have appeared to influence the work reported in this paper.

\section*{CRediT authorship contribution statement}
\textbf{Yiquan Wang:} Conceptualization, Methodology, Software, Validation, Formal analysis, Investigation, Data curation, Writing -- original draft, Writing -- review \& editing, Visualization.

\section*{Data and Code Availability}
The source code for this study is available on GitHub at \url{https://github.com/wyqmath/discrete_hasimoto_protein}.

\bibliographystyle{unsrt}
\bibliography{manuscript}
\end{document}